\newcommand{\FF}{\vspace*{\medskipamount}}
\newcommand{\BBB}{\vspace*{-\bigskipamount}}
\newcommand{\Paragraph}[1]{\BBB\paragraph{#1}}
\newcommand{\remove}[1]{}
\newlength{\pagewidth}
\newlength{\captionwidth}
\newcommand{\qed}{\hfill $\square$ \smallbreak}
\newenvironment{proof}{\noindent{\bf Proof:}}{\qed}
\newtheorem{theorem}{Theorem}
\newtheorem{lemma}{Lemma}
\newtheorem{proposition}{Proposition}
\begin{document}

\title{		Stable Scheduling in Transactional Memory\vfill}

\author{	Costas Busch \footnotemark[1]
		\and
		Bogdan S. Chlebus  	\footnotemark[1]   	
		\and
		Dariusz R. Kowalski 	\footnotemark[1]
		\and
		Pavan Poudel \footnotemark[1]}

\footnotetext[1]{School of Computer and Cyber Sciences, Augusta University, Augusta, Georgia, USA.}

\date{}

\maketitle

\vfill

\begin{abstract}
We study computer systems with transactions executed on a set of shared objects.
Transactions arrive continually subjects to constrains that are framed as an adversarial model and impose limits on the average rate of transaction generation and the number of objects that transactions use.
We show that no deterministic distributed scheduler in the queue-free model of transaction autonomy can provide stability for any positive rate of transaction generation.
Let a system consist of $m$ shared objects and an adversary be constrained such that each transaction may access at most $k$ shared objects. 
We prove that no scheduler can be stable if a generation rate is greater than $\max\bigl\{\frac{2}{k+1},\frac{2}{\lfloor \sqrt{2m} \rfloor}\bigr\}$.
We develop a centralized scheduler  that is stable if a transaction generation  rate is at most $\max\bigl\{\frac{1}{4k}, \frac{1}{4\lceil\sqrt{m}\rceil} \bigr\}$. 
We design a distributed scheduler  in the queue-based model of transaction autonomy, in which a transaction is assigned to an individual processor, that guarantees stability if the rate of transaction generation is less than $\max\bigl\{ \frac{1}{6k},\frac{1}{6\lceil\sqrt{m}\rceil}\bigr\}$. 
For each of the schedulers we give upper bounds on the queue size and transaction latency in the range of rates of transaction generation for which the scheduler is stable.

\vfill

~

\noindent
\textbf{Key words:}
Transactional memory, shared object, dynamic transaction generation, adversarial model, stability, latency.
\end{abstract}

\vfill

\thispagestyle{empty}

\setcounter{page}{0}

\newpage

\section{Introduction}

\label{sec:introduction}

We propose to investigate dynamic transactional memory  with arrivals of transactions modeled adversarially.
The goal is to develop a framework to assess the performance of transaction schedulers in quantitative terms.
In particular, we seek schedulers that provide stability of transaction-memory systems and restrain delays in processing transactions.

The adversarial models of generating transactions that we use are inspired by the adversarial queueing theory, which has been applied to study stability of routing algorithms with packets injected continually.
Transmissions of packets in communication networks are constrained by the properties of network, like its topology and capacities of links or channels.
In the case of transactional memory, executing multiple transactions concurrently is constrained by the property that a transaction requires exclusive  access to each object it needs to interact with.

A computer system includes a fixed set of shared objects.
Threads with transactions to execute are spawned continually.
The system is synchronous in that an execution of an algorithm scheduling transactions is structured into rounds. 
It takes one round to execute a transaction successfully.
Multiple transactions can be invoked concurrently, but a transaction requires exclusive access to each object that it needs to interact with in order to be executed successfully.
If multiple transactions accessing the same object are invoked at a round then all of them are aborted.
The arrival of threads with transactions is governed by an adversarial model with parameters bounding the average generation rate and the number of transactions that can be generated at one round.
Processed transactions may be additionally constrained by imposing  an upper bound on the number of objects a transaction needs to access.

The task of a considered computer system is to eventually execute each generated transaction, while striving to limit the number of pending transactions at any round and the time spent waiting by each pending transaction.
Once a transaction is generated, it may need to wait to be invoked.
It is a scheduling algorithm that manages the timings of invocations of pending transactions.
We consider both centralized  and distributed schedulers.

There are two models of generating transactions which specify the autonomy of individual transactions.
In the queue-free case, for each transaction there is a corresponding autonomous thread responsible for its execution.
A distributed scheduler in the queue-free model is executed by the threads that attempt to invoke transactions on shared objects.
In the queue-based model, there is a fixed set of processors, and each thread with a transaction is assigned to a processor.
A distributed scheduler in the queue-based model is executed by the processors that communicate through the shared objects by performing transactions on them.
A centralized scheduler is not affected by constraints on autonomy of each transaction, since all pending transactions are managed en masse.
The schedulers we consider are deterministic, in that they do not resort to randomization.

A scheduler is stable against a given adversary if the number of pending transactions stays bounded at all times, while new transactions are generated by the adversary subject to constrains on the number of new transactions  generated in a contiguous segment of rounds. 
The primary measure of performance of a scheduler is the maximum rate of transaction generation by the adversary such that the scheduler can maintain the system stable for this generation rate.
If a given system is stable, as determined by a set of shared objects and adversarial constraints on transaction generation and a scheduler of transactions, then  upper bounds on the number of pending transactions at a round and a delay of a transaction's execution are used as  secondary performance metrics.

\Paragraph{The contributions.}

We show first that no deterministic distributed scheduler in the queue-free model of transaction autonomy can provide stability for any positive rate of transaction generation.
Let a computer system consist of $m$ shared objects and the adversary be constrained such that each transaction needs to access at most $k$ of the shared objects. 
We show that no scheduler can be stable if a generation rate is greater than $\max\bigl\{\frac{2}{k+1},\frac{2}{\lfloor \sqrt{2m} \rfloor}\bigr\}$.
We develop a centralized scheduler  that is stable if the transaction generate rate is at most $\max\bigl\{\frac{1}{4k}, \frac{1}{4\lceil\sqrt{m}\rceil} \bigr\}$. 
We design a distributed scheduler  in the queue-based model of transaction autonomy, in which a transaction is assigned to an individual processor, that guarantees stability if the rate of transaction generation is less than $  \max\bigl\{ \frac{1}{6k},\frac{1}{6\lceil\sqrt{m}\rceil}\bigr\}$. 
For each of the two schedulers we develop, we give upper bounds on the queue size and transaction latency in the range of rates of transaction generation for which the scheduler is stable.
Table~\ref{table:comparison} gives a summary of the ranges of rates of transaction generation for which deterministic schedulers are stable.

\newcommand{\RB}{\raisebox{3ex}{}}
\newcommand{\LB}{\raisebox{-1.7ex}{}}

\begin{table}[t]
\centering
\begin{tabular}{|c|c|c|}
\hline 
\RB \LB
Scheduler & Lower bound & Upper bound  \\
\hline \hline
\RB \LB
distributed queue-free& stability impossible \hfill (Section~\ref{sec:technical-preliminaries}) &\\
\hline
\RB \LB
centralized & $\rho > \max\bigl\{\frac{2}{k+1},\frac{2}{\lfloor \sqrt{2m} \rfloor}\bigr\}$ \hfill (Section~\ref{sec:lower-bound}) & $\rho\le \max\bigl\{\frac{1}{4k}, \frac{1}{4\lceil\sqrt{m}\rceil} \bigr\}$  \hfill (Section~\ref{sec:centralized-scheduler})\\
\hline
\RB \LB
distributed queue-based & & $\rho <  \max\bigl\{ \frac{1}{6k},\frac{1}{6\lceil\sqrt{m}\rceil}\bigr\}$  (Section~\ref{sec:distributed-scheduler})\\
\hline
\end{tabular}

\caption{\label{table:comparison}
A summary of the ranges of rates of transaction generation for which deterministic schedulers are stable.
The used notations are as follows: $m$ is the number of shared objects, $k$ is the maximum number of shared objects accessed by a transaction, and  $\rho$ is the  rate of transaction generation.
Upper bounds limit transaction generation rates for which stability is achievable.
Lower bounds limit transaction generation rates for which stability is not possible. 
A lower bound for centralized schedulers holds a priori for distributed queue-based schedulers.}
\end{table}

\Paragraph{Related work.}

Scheduling transactions has been studied for both shared memory multi-core and distributed systems. 
Most of the previous work on scheduling transactions considered an offline case where all transactions are known at the outset. 
Some previous work considered online scheduling where a batch of transactions arrives one by one and the performance of an online scheduler is compared to a scheduler processing the batch offline. 
No previous work known to the authors of this paper addressed dynamic transaction arrivals with potentially infinitely many transactions to be scheduled  in a never-ending execution.

Attiya et al.~\cite{AttiyaGM15}, and Sharma and Busch~\cite{SharmaB2014, SharmaB15} considered transaction scheduling in distributed systems with provable performance bounds on communication cost.
Transaction scheduling in a distributed system with the goal of minimizing execution time was first considered by Zhang et al.~\cite{ZhangRP2014}.
Busch et al.~\cite{BuschHPS18} considered  minimizing both the execution time and communication cost simultaneously. 
They showed that it is impossible to simultaneously minimize execution time and communication cost for all the scheduling problem instances in arbitrary graphs even in the offline setting. 
Specifically, Busch et al.~\cite{BuschHPS18} demonstrated a tradeoff between minimizing execution time and  communication cost and provided offline algorithms that separately optimizw  execution time and communication cost. 
Busch et al.~\cite{BuschHPS21} considered transaction scheduling tailored to specific popular topologies and provided offline algorithms that minimize simultaneously execution time and communication cost. 
Busch et al.~\cite{BuschHPS22} studied online algorithms to schedule transactions. 
Distributed directory protocols have been designed by Herlihy and Sun~\cite{HerlihyS2007}, Sharma and Busch~\cite{SharmaB2014}, and Zhang et al.~\cite{ZhangRP2014}, with the goal to optimize communication cost in scheduling transactions. 
Zhang and Ravindran~\cite{ZhangRRG10} provided a distributed dependency-aware model for scheduling transactions in a distributed system that manages dependencies between conflicting and uncommitted transactions so that they can commit safely. 
This model has the inherent tradeoff between concurrency and communication cost. 
Zhang and Ravindran \cite{ZhangRRG10a} provided  cache-coherence protocols for distributed transactional memory based on a distributed queuing protocol.

Adversarial queuing is a methodology to capture stability of processing incoming tasks without any statistical assumptions about task generation.
It provides a framework to develop worst-case bounds on performance of deterministic distributed algorithms in a dynamic setting.
This approach to study routing algorithms in store-and-forward networks was proposed by Borodin et al.~\cite{BorodinKRSW-JACM01}, and continued by Andrews et al.~\cite{AndrewsAFLLK-JACM01}.
Adversarial queuing has been applied to other dynamic tasks in   communication networks.
Bender et al.~\cite{BenderFHKL-SPAA05} considered broadcasting in multiple-access channels with queue-free stations in the framework of adversarial queuing.
Chlebus et al.~\cite{ChlebusKR-TALG12} proposed to investigate deterministic distributed broadcast in multiple access channels performed by stations with queues in the adversarial setting.
This direction was continued by Chlebus et al.~\cite{ChlebusKR-DC09} who studied the maximum throughput in such a setting.
Anantharamu et al.~\cite{AnantharamuCKR-JCSS19} considered packet latency of deterministic broadcast algorithms with injection rates less than~$1$.
Chlebus et al.~\cite{ChlebusHJKK-SPAA19} studied adversarial routing in multiple-access channels subject to energy constraints.
Garncarek et al.~\cite{GarncarekJK19} investigated adversarial stability of memoryless packet scheduling policies in multiple access channels. 
 Garncarek et al.~\cite{GarncarekJK20} studied adversarial communication through  channels with collisions between communicating agents represented as graphs.

\section{Technical Preliminaries}

\label{sec:technical-preliminaries}

A computer system includes a fixed set of $m$ shared objects.
The system executes an algorithm.
An execution of the algorithm is synchronous in that is partitioned into time steps, which we call \emph{rounds}.
The  algorithm spawns threads.
Each thread generates and executes one transaction at a time.
The threads communicate through shared objects.

The \emph{type} of a transaction is the set of objects it may need to access during execution.
To determine the type of a transaction, it suffices to read it to list all the mentioned objects.	 	
The number of objects in a transaction's type is the \emph{weight} of this transaction and the type.
If the types of two transactions share an object, then we say that this creates a \emph{conflict for access} to this object, and that the transactions involved in a conflict for access to an object \emph{collide} at this object.
A set of transactions with the property that no two different transactions in the set collide at some shared object is called \emph{conflict free} or \emph{collision free}.

In a \emph{queue-free} model of transaction autonomy, each transaction is associated with a thread,  which exists only for the purpose to execute this transaction and it disappears after the transaction's execution.
In the \emph{queue-based} model, each transaction is assigned to a group of transactions managed by a processor. 
All the pending transactions at a processor make its \emph{queue}.

\Paragraph{Scheduling transactions.}

Transactions are managed by a \emph{scheduler}.
This is an algorithm that determines for each round which pending transactions are invoked at this round.
A \emph{centralized scheduler} is an algorithm for the queue-free model that knows all the transactions pending at a round, can invoke each pending transaction, and receives feedback from each object about committing to an invoked transaction or aborting it.
In the queue-based model, a \emph{distributed scheduler} is executed by the processors that communicate among themselves through the shared objects.

Scheduling transactions is constrained by whether this is a queue-free or queue-based model.
In a queue-free model, if a pending transaction invoked at a round is not involved in conflict with any object it needs to access, for any of the transactions invoked at this round concurrently, then this transaction is  executed successfully.
It follows that all transactions in a  conflict-free set of transactions can be executed together at one round.
Complementarily, if a pending transaction invoked at a round is involved in conflict for an object it needs to access with some transaction invoked concurrently then both transaction get aborted at this round and stay pending.
The queue-based model is more restricted, in that the queue-free model's  constraints do apply, but additionally, for each processor, at most one transaction in this processor's queue can be performed at a round.
A transaction invoked at a round that executes successfully is no longer pending, while an aborted transaction stays pending in the next round.
In the queue-free model, transactions are managed en-masse.

\Paragraph{Adversaries.}

We consider a setting in which new transactions arrive continuously to the system. 
The process of generation of transactions is represented quantitatively by adversarial models.
We study two types of adversaries corresponding to the queue-free and queue-based models.
In the queue-free model, a transaction generated at a round contributes a unit to the \emph{congestion} at the round at each object the transaction includes in its type.
This is the \emph{queue-free adversary}.
In the queue-based model, a transaction generated at a round at a processor contributes a unit to the \emph{congestion} at the round at each object the transaction includes in its type and also to the processor the transaction is generated at.
This is the \emph{queue-based adversary}.

Quantitative restrictions imposed on adversaries are expressed in terms of bounds on congestion.
A queue-free adversary generates transactions with \emph{generation rate~$\rho$} and \emph{burstiness component~$b$} if, in each contiguous time interval $\tau$ of length $t$ and for each shared object, the amount of congestion created for the object at all the rounds in~$\tau$ together is at most $\rho t + b$.
A queue-based adversary generates transactions with \emph{generation rate~$\rho$} and \emph{burstiness component~$b$} if, in each contiguous time interval $\tau$ of length $t$ and for each shared object and for each processor, the amount of congestion created for the object at all the rounds in~$\tau$ together is at most $\rho t + b$ and the amount of congestion created for the processor at all the rounds in~$\tau$ together is at most $\rho t + b$.
For these adversarial models, we assume that $\rho>0$ is a real number and $b>0$ is an integer.
Each such an adversary is said to be of \emph{type $(\rho,b)$}.
The \emph{burstiness} of an adversary is the maximum number of transactions the adversary can generate in one round. 
By considering a time interval $\tau$ of length~$1$ we obtain that the burstiness of an adversary of type $(\rho,b)$ is $\lfloor \rho + b\rfloor$, which is $b$ for $\rho<1$.

\Paragraph{Performance metrics.}

A scheduler is \emph{stable}, against a given type of an adversary, if the number of pending transactions stays bounded in the course of any execution in which transactions are generated by the adversary of this type.
For an object and a round number~$r$, at most $r$ transactions that contributed to congestion at this object can get executed in the first $r$ rounds.
It follows that no scheduler can be stable if its injection rate is greater than~$1$, so we consider only adversaries of types $(\rho,b)$ in which $0<\rho\le 1$.
A transaction's \emph{delay} is the number of rounds between the generation and execution of this transaction.
The \emph{latency} of a scheduler in an execution is the maximum possible delay of a transaction generated in the execution.


\begin{proposition}

No deterministic distributed scheduler for a system with one shared object can be stable against an adversary of type $(\rho,2)$, for any constant $\rho>0$. 
\end{proposition}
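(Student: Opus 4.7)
The plan is a symmetry-based indistinguishability argument combined with a periodic injection schedule. With a single shared object, every transaction has the same type, so two transactions injected at the same round are born with identical local state (same type, same birth round, empty invocation history). I would prove by induction on rounds that any two such ``twin'' threads remain forever in identical local state: at each round the deterministic scheduler prescribes the same invoke/abstain decision for both, and if they invoke they collide on the unique object and both receive an abort acknowledgment, so their histories stay synchronized; if neither invokes, the state simply does not change. Consequently neither twin ever commits.

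The adversary exploits this by injecting pairs of twins periodically. I would set $k=\lceil 2/\rho\rceil$ and inject $2$ transactions at every round of the form $1+ik$ for $i=0,1,2,\ldots$. One then checks that this respects type $(\rho,2)$: in any window of length $t$ the number of injections is at most $2\lceil t/k\rceil\le 2t/k+2\le \rho t+2$, using $k\ge 2/\rho$. By the symmetry claim, every twin pair remains pending forever, so at round $T$ the system holds at least $2\lfloor T/k\rfloor$ pending transactions, growing without bound and contradicting stability.

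The main subtlety is verifying that the inductive invariant for a given twin pair is not disturbed by interactions with transactions from other pairs already in the system. I would set up a joint induction over all pairs simultaneously: at every round each pair's two members share an identical local view and therefore make identical scheduling decisions. Even when a third (or fourth) transaction from some earlier pair happens to invoke at the same round as the current pair, both members of the current pair invoke together and both receive the same ``abort'' outcome, so symmetry within the pair is preserved regardless of external activity. This is the one place where care is required, because one might naively worry that external transactions could somehow break the tie; the point is that collisions on the single object affect both twins identically, so the hardest case — when multiple pairs are active at once — still leaves the per-pair invariant intact.
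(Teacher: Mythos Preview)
Your proof is correct and follows essentially the same approach as the paper: a symmetry/indistinguishability induction showing that twin transactions injected together never commit, combined with periodic re-injection permitted by the adversary's positive rate. Your explicit treatment of the multi-pair interaction (the joint induction) is a welcome bit of extra rigor that the paper leaves implicit, but the underlying argument is the same.
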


\begin{proof}
Consider an execution in which the adversary generates two transactions that need to access the only existing object at the first round.
This is consistent with the power of the adversary, since its burstiness is~$2$.
Each transaction is controlled by its thread which executes a deterministic code.
We argue by induction of the round number that in each round the states of the threads are identical.
The initialization of each of the threads is the same so they start in the same initial state.
The threads communicate via the shared object.
At a round, assuming that the states of both threads are equal, the threads perform the same action, which is either pausing or invoking the transaction.
Pausing does not generate any feedback from the object, so each thread transitions to the same new state, driven only by the change of the round number.
Invoking a transaction results in receiving feedback indicating conflict for access and the resulting abort, so each thread transitions to the same state determined by this feedback.
Since the states of the threads stay the same at every round, these two transactions will stay pending forever.

If the generation rate $\rho$ is positive, then the adversary can generate new two transactions that have the object in their types again, after sufficiently many rounds of the previous generation have passed.
Such a pattern of repeated generation of pairs of transactions can be repeated indefinitely, resulting in the number of pending transactions growing unbounded. 
\end{proof}

\Paragraph{Coloring graphs.}

We will use properties of colorings of vertices of simple graphs.
An algorithm for coloring vertices of a graph that we call \emph{primary greedy} starts by ordering the vertices in a fixed order.
It will assign colors to vertices in this order.
Positive natural numbers are used as colors. 
A color assigned to a vertex is smallest such that it is different from the colors already assigned to the neighbors.
The primary greedy algorithm uses a number of colors that is at most the maximum vertex degree plus~$1$; such a maximum color is assigned to a vertex~$v$ only if $v$'s degree is maximum, all its neighbors got colors already assigned, and all these colors are all different among themselves.
An algorithm for coloring vertices of a graph that we call \emph{alternative greedy} begins by ordering vertices in a fixed order.
Then maximal independent sets of vertices in induced subgraphs are found in a greedy manner.
To construct a maximal independent set, we process the vertices in the given order one by one.
If no neighbor of a processed vertex has been selected yet  as a member of the independent set, then we add the processed vertex to the independent set, otherwise we skip it.
 After all the vertices have been processed, the vertices in the obtained independent set are given the same color and get removed from the graph to produce a pruned induced subgraph.
 This procedure continues until all the vertices get removed.
 
\begin{proposition}
\label{prop:alternative-greedy}

The alternative greedy coloring algorithm uses a number of colors that is at most the maximum degree plus~$1$.
\end{proposition}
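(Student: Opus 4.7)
The plan is to bound, for each vertex $v$, the index of the color class (i.e., the round of the outer loop) in which $v$ is removed from the graph, and then take the worst case over all $v$.

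First I would fix an arbitrary vertex $v$ and track what happens to $v$ across successive rounds of the algorithm. In each round, the algorithm sweeps through the surviving vertices in the fixed order and constructs a maximal independent set $I$: vertex $u$ is added to $I$ exactly when no previously-processed neighbor of $u$ has already been placed in $I$. If $v$ is not placed in $I$ during a given round, then by the greedy rule there exists a neighbor $u$ of $v$ (still present in the current induced subgraph, and preceding $v$ in the order) that was added to $I$ in this round. That neighbor is then assigned the current color and removed from the graph before the next round begins.

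Next I would charge each round in which $v$ fails to be colored to a distinct neighbor of $v$: the neighbor responsible for blocking $v$ in that round is removed at the end of the round, so it cannot block $v$ again in any later round. Hence the number of rounds in which $v$ survives uncolored is bounded by the number of neighbors of $v$ in the original graph, which is at most $\Delta$, the maximum degree. Therefore $v$ is either placed into some independent set by round $\Delta+1$, or by that round all of its neighbors have already been removed (in which case $v$ is certainly placed into the next independent set constructed). Either way, $v$ is assigned a color with index at most $\Delta+1$.

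Since this bound holds uniformly for every vertex, the alternative greedy algorithm terminates having used at most $\Delta+1$ colors in total, which is the claim of Proposition~\ref{prop:alternative-greedy}. There is no real obstacle here beyond the charging argument; the only subtlety is to be precise that the blocking neighbor in each failed round is counted only once, which follows immediately from the fact that blockers are deleted at the end of the round in which they block.
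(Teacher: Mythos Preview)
Your proof is correct, but it takes a different route from the paper's. The paper argues that the alternative greedy algorithm produces \emph{exactly the same coloring} as the primary greedy algorithm (same vertex order, same set of colors), by an induction on vertex position showing that the $x$-th independent set coincides with the set of vertices receiving color~$x$ under primary greedy; the $\Delta+1$ bound then follows from the already-stated bound for primary greedy. Your argument is instead a direct charging argument: each round in which $v$ is skipped is charged to a distinct blocking neighbor, which is then deleted, so $v$ is skipped at most $\deg(v)\le\Delta$ times. Your approach is more self-contained and does not rely on the primary greedy bound as a black box; the paper's approach establishes the stronger fact that the two colorings are literally identical, though that stronger fact is not used elsewhere in the paper.
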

 
\begin{proof}
We may interpret an execution of the alternative greedy coloring with regard to the coloring produced by the primary greedy algorithm.
We show that the ordinal number $x$ of an obtained independent set  consists of all the vertices that get color~$x$ assigned by the primary greedy algorithm.
The proof is by induction on the position of a vertex in a given ordering of vertices.
The base of induction is about the first vertex in the ordering.
This first vertex is colored~$1$ and it also belongs to the first independent set.
To show the inductive step, consider a vertex~$v$ following a block of vertices already colored.
Let $y$ is the smallest available color for vertex~$v$ by the primary greedy coloring.
All the numbers less than $y$ represent independent sets in the order they were built, by the inductive assumption.
Once these independent sets are removed, the vertex~$v$ is placed in the $y$th independent set as its first element.
\end{proof}

\section{A Lower Bound}

\label{sec:lower-bound}

We show that no scheduler can handle dynamic transactions if a generation rate is sufficiently high with respect to the number of shared objects $m$ and an upper bound $k$ on the weight of a transaction.

If $a$ and $b$ are integers where $a\le b$ then let $[a,b]$ denote the set of integers $\{a,a+1,\ldots,b\}$.
We begin with a preliminary fact.


\begin{lemma}
\label{lem:sets}

For an integer $n>0$, there is a family of sets $A_1, A_2,\ldots, A_{n+1}$, each a subset of $[1,\frac{n(n+1)}{2}]$, such that every set $A_i$ has $n$ elements, any two sets $A_i$ and $A_j$, for $i\ne j$, share an element, and each element of $[1,\frac{n(n+1)}{2}]$ belongs to exactly two sets $A_i$ and $A_j$, for $i\ne j$. 
\end{lemma}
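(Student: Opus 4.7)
The plan is to build the family explicitly from the complete graph $K_{n+1}$, exploiting the identity $\binom{n+1}{2} = \frac{n(n+1)}{2}$, which matches the size of the ground set with the number of edges.

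First I would fix an arbitrary bijection $\phi$ between $[1, \frac{n(n+1)}{2}]$ and the set of unordered pairs $\bigl\{\{i,j\} : 1 \le i < j \le n+1\bigr\}$, i.e.\ the edges of $K_{n+1}$. Think of each element of the ground set as a labeled edge, and the indices $1,\ldots,n+1$ as vertices. Then I would define
\[
A_i \;=\; \bigl\{\,\phi^{-1}(\{i,j\}) \;:\; j \in [1,n+1],\; j\ne i\,\bigr\}
\]
for each $i \in [1,n+1]$. In words, $A_i$ collects the ground-set labels of all edges incident to vertex $i$ in $K_{n+1}$.

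Next I would verify the three required properties, each of which reduces to an obvious fact about $K_{n+1}$. The cardinality $|A_i| = n$ follows because vertex $i$ has exactly $n$ neighbors in $K_{n+1}$ and $\phi$ is a bijection. For any $i \ne j$, the sets $A_i$ and $A_j$ share precisely the element $\phi^{-1}(\{i,j\})$, corresponding to the single edge joining vertices $i$ and $j$. Finally, for any element $e \in [1,\frac{n(n+1)}{2}]$, writing $\phi(e) = \{i,j\}$, the membership $e \in A_\ell$ holds iff $\ell \in \{i,j\}$, so $e$ belongs to exactly two of the sets.

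I do not anticipate a main obstacle: the lemma is essentially the statement that the edge-vertex incidence structure of $K_{n+1}$ realizes a $2$-regular, pairwise-intersecting set system of the prescribed parameters, and the only subtlety is recognizing the numerical coincidence $\binom{n+1}{2} = \frac{n(n+1)}{2}$ that makes the ground-set size and the edge count agree.
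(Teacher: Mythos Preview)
Your proof is correct. It differs from the paper's argument in presentation: the paper builds the sets $A_1,\ldots,A_{n+1}$ by an explicit recursive rule (each new $A_{\ell+1}$ takes one fresh element from each of $A_1,\ldots,A_\ell$ and fills the rest with the smallest unused integers), and then verifies the three properties by tracing through the construction. Your approach instead identifies the family outright as the edge-vertex incidence system of $K_{n+1}$, after which all three properties are immediate graph-theoretic facts. The two constructions are in fact the same object---the paper's recursion amounts to a particular labeling of the edges of $K_{n+1}$---but your route is shorter and more transparent, while the paper's gives a concrete enumeration of the $A_i$ that could be read off directly if an explicit listing were needed.
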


\begin{proof}
We start with $A_1=[1,n]$, then set $A_2=\{1\}\cup [n+1,n+(n-1)]$, next set $A_3=\{2,n+1\}\cup [2n,2n-1+(n-2)]$, and next define $A_4=\{3,n+2, 2n\}\cup [3n-2,3n-3+(n-2)]$.
Observe that number~$1$ is shared by $A_1$ and $A_2$, number~$2$ is shared by $A_1$ and $A_3$, number~$3$ is shared by $A_1$ and $A_4$, number~$n+1$ is shared by $A_2$ and $A_3$, number $n+2$ is shared by $A_2$ and $A_4$.
This is the beginning of a general construction that is recursive and proceeds as follows.

Suppose we have defined each among the sets $A_1,\ldots, A_\ell$, for $\ell \le n$.
The set $A_{\ell+1}$ contains $n$ elements, of which the smallest $\ell$ are the following: the first number in $A_1$ that does not belong to any of the sets $A_1,\ldots, A_\ell$ except for~$A_1$, the first number in $A_2$ that does not belong to any of the sets $A_1,\ldots, A_\ell$ except for~$A_2$, and so on, and finally the first number in~$A_\ell$ that does not belong to any of the sets $A_1,\ldots, A_\ell$ except for~$A_\ell$.
The remaining $n-\ell$ elements in~$A_{\ell+1}$ are specified to be the smallest $n-\ell$ positive integers that do not belong to the union of $A_1,\ldots,A_\ell$.
This process defines the sets $A_1,\ldots, A_k,A_{n+1}$.

By the construction of the sets $A_1,\ldots, A_k,A_{n+1}$, any two sets $A_i$ and $A_j$, for $1\le i,j\le n+1$ and $i\ne j$, share exactly one element.
Set $A_1$ contributes $n$ integers, set $A_2$ contributes $n-1$ integers beyond $A_1$, set $A_3$ contributed $n-2$ integers beyond $A_1\cup A_2$, and set $A_n$ contributes one element beyond $A_1\cup A_2\cup\ldots\cup A_{n-1}$. 
The set $A_{n+1}$ does not contribute any new elements.
The union of $A_1,\ldots, A_n$ covers the integers between $1$ and $n+(n-1) + (n-2)+\cdots +1= \frac{n(n+1)}{2}$, which makes the segment of integers $[1,\frac{n(n+1)}{2}]$. 
\end{proof}

Now, we give a lower bound on generation rate to keep scheduling stable.


\begin{theorem}
\label{thm:lower-bound}

A queue-free adversary of type $(\rho,b)$ generating transactions for a system of $m$ objects such that each transaction is of weight at most $k$ can make any scheduling algorithm unstable if injection rate $\rho$ satisfies $\rho>\max\bigl\{\frac{2}{k+1},\frac{2}{\lfloor \sqrt{2m} \rfloor}\bigr\} $. 
\end{theorem}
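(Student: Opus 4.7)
The plan is to use Lemma~\ref{lem:sets} to build a small family of pairwise-colliding transaction templates, and then let the adversary inject the whole family periodically. Because any two transactions in such a family collide on at least one shared object, a scheduler commits at most one template transaction per round, no matter how it coordinates, so a sufficiently fast periodic injection suffices to drive the backlog to infinity.

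First I would set $n=\min\{k,\lfloor\sqrt{2m}\rfloor-1\}$. This choice makes $n\le k$ and $n(n+1)/2\le m$: indeed, writing $s=\lfloor\sqrt{2m}\rfloor$ we have $s^2\le 2m$, so $(s-1)s\le 2m-s\le 2m$, which handles the second inequality. Moreover $n+1=\min\{k+1,\lfloor\sqrt{2m}\rfloor\}$, hence $2/(n+1)=\max\{2/(k+1),2/\lfloor\sqrt{2m}\rfloor\}$, matching the bound in the statement. Applying Lemma~\ref{lem:sets} with this $n$ produces $n+1$ subsets $A_1,\ldots,A_{n+1}$ of size $n$ of a universe of $n(n+1)/2$ elements, which I would identify with $n(n+1)/2$ of the shared objects. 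I would then treat each $A_i$ as the type of a separate transaction template; each template has weight exactly $n\le k$, and since any two $A_i,A_j$ share an object the $n+1$ templates pairwise collide.

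Next I would specify the adversary: at each round of the form $\lfloor 2i/\rho\rfloor$, for $i=0,1,2,\ldots$, it injects one transaction of each of the $n+1$ types. Lemma~\ref{lem:sets} says each object of the universe belongs to \emph{exactly} two of the $A_i$'s, so one entire batch contributes exactly $2$ units of congestion at every object. Over any window of length $t$ the number of batch rounds it contains is at most $t\rho/2+O(1)$, so the per-object congestion in the window is $\rho t+O(1)$; absorbing the hidden constant into $b$ gives a valid queue-free adversary of type $(\rho,b)$ whose transactions all have weight $n\le k$.

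Finally, I would compare throughput against injection. Over any $L$ rounds the scheduler commits at most $L$ template transactions (one per round, by pairwise collisions), while the adversary injects at least $(n+1)\bigl(L\rho/2-O(1)\bigr)$ of them. The hypothesis $\rho>2/(n+1)$ gives $(n+1)\rho/2>1$, so the backlog grows linearly in $L$ and the scheduler is unstable. The step I expect to be the most delicate is the adversary's burstiness bookkeeping: placing batches at integer rounds of the form $\lfloor 2i/\rho\rfloor$ rather than at real-valued multiples of $2/\rho$ must still yield a clean constant $b$ without blunting the factor $2$ in the numerator of the threshold, and this is where I would tune the spacing if necessary.
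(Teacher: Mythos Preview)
Your plan is essentially the paper's: both arguments invoke Lemma~\ref{lem:sets} to produce $n+1$ pairwise-colliding templates (so at most one can complete per round) and then have the adversary inject them at an average rate exceeding one per round. The paper splits into the cases $k(k+1)/2\le m$ and $k(k+1)/2>m$, taking $n=k$ in the first and the largest $n$ with $n(n+1)/2\le m$ in the second; your single choice $n=\min\{k,\lfloor\sqrt{2m}\rfloor-1\}$ unifies these and lands directly on $2/(n+1)=\max\{2/(k+1),2/\lfloor\sqrt{2m}\rfloor\}$, which is a clean simplification.

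The one place your sketch needs care is exactly where you flagged it. Injecting the whole batch of $n+1$ templates at a single round gives per-object congestion $2$ at that round, which requires $\rho+b\ge 2$; when $b=1$ (and $\rho<1$) this already violates the type-$(\rho,b)$ constraint, and you cannot ``absorb the constant into $b$'' because $b$ is a given parameter of the adversary, not something you may choose. The paper sidesteps this by injecting the templates one at a time in round-robin order $T_1,T_2,\ldots,T_{n+1},T_1,\ldots$ at full power subject to the given $(\rho,b)$ constraint; since one full cycle contributes exactly $2$ to each object, the condition $\rho>2/(n+1)$ alone guarantees that the adversary can sustain more than one template per round on average, for any $b\ge 1$. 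Adopting that round-robin schedule (or spreading each batch across several rounds) fixes your argument with no change to the rest of the reasoning.
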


\begin{proof}
Let the $m$ objects be denoted as $o_1,o_2,\ldots, o_m$.
Suppose first that $\frac{k(k+1)}{2}\le m$.
The transactions to be generated will use only the objects $o_1,o_2,\ldots, o_s$, where $s=\frac{k(k+1)}{2}$.

Let us take the family of sets $A_1, A_2,\ldots, A_{k+1}$ as in Lemma~\ref{lem:sets}, in which $n$ is set to~$k$.
We will use a fixed set of transactions $T_1, T_2,\ldots, T_{k+1}$ defined such that transaction $T_i$ uses object $o_j$ if and only if $j\in A_i$.
In particular, each transaction uses $k$ objects.
The adversary generates these transactions listed in order $L_0,L_1,L_2,\ldots$, where $L_{i-1}$ is the $i$th transaction generated and $L_i$ is a transaction identical to $T_{1+i\bmod (k+1)}$.
The adversary generates new transactions at full power, in the sense that if a transaction can be generated in a given round, subject to the specification of the adversary's type, that a consecutive transaction is generated.
Specifically, consider a round~$r+1$. 
Let $i$ be the highest index of a transaction $L_i$ generated by round $r$. 
Then in round $r+1$ the adversary generates transactions that make a maximal prefix of the sequence $L_{i+1}, L_{i+2},\ldots$ such that the total number of transactions generated by round $r+1$ satisfies the constraints on objects' congestion of type $(\rho,b)$.
The adversary may generate no transaction at a round and it may generate multiple transaction at a round.
For example, the adversary generates exactly the transactions $L_0, \ldots, L_{b-1}$ simultaneously in the first round.

By Lemma~\ref{lem:sets}, at most one transaction can be executed at a round.
The  $k+1$ transactions $T_1, T_2,\ldots, T_{k+1}$ require $k+1$ rounds to have each one executed, one transaction per round. 
Discounting for the burstiness of generation, which is possible due to the burstiness component $b$ in the type~$(\rho, b)$, these transactions can be generated with a frequency of at most one new transaction generated per round if the execution is to stay stable.

The group of transactions $T_1,\ldots,T_{k+1}$  together contribute $2$ to the congestion of each used object, by Lemma~\ref{lem:sets}.
If an execution is stable then the inequality $\rho(k+1)\le 2$ holds.
This gives a bound $\rho\le \frac{2}{k+1}$ on the generation rate of an adversary if the execution is stable.
In case $\rho > \frac{2}{k+1}$, the adversary can generate at least one transaction at every round, and for each round $r$ it can generate two transactions at some round after~$r$.
Such an execution is unstable, because at most one transaction among $T_1,\ldots,T_{k+1}$ can be executed in one round.

Next, consider the case $\frac{k(k+1)}{2}> m$. 
Let $n$ be the greatest positive integer such that $\frac{n(n+1)}{2}\le m$.
We use a similar reasoning as in the case $\frac{k(k+1)}{2}\le m$, with the family of sets $A_1, A_2,\ldots, A_{n+1}$ as in Lemma~\ref{lem:sets}.
In particular, we use a set of transactions $T_1, T_2,\ldots, T_{n+1}$ defined such that transaction~$T_i$ uses an object $o_j$ if and only if $j\in A_i$.
The rules of generating new transactions by the adversary are similar.
We obtain the inequality $\rho\le \frac{2}{n+1}$ by the same argument.
The inequality $\frac{n(n+1)}{2}\le m$ implies $n+1=\lfloor \frac{1}{2}(1+\sqrt{1+8m})\rfloor$, by algebra.
We have the estimates
\[
\frac{2}{n+1} 
= 
\frac{2}{\lfloor \frac{1}{2}(1+\sqrt{1+8m})\rfloor}
\le
\frac{2}{\lfloor \sqrt{2m} \rfloor}
\ .
\]
If $\rho>\frac{2}{\lfloor \sqrt{2m} \rfloor}$ then also $\rho> \frac{2}{n+1}$.
It follows that if the adversary is of a type $(\rho,b)$ such that 
$\rho> \frac{2}{\lfloor \sqrt{2m} \rfloor}$, then this adversary generating transactions at full power can  generate at least one transaction at every round, and for each round $r$ it can generate two transactions at some round after~$r$.
This makes the queue of transactions grow unbounded.
\end{proof}

\section{A Centralized Scheduler}

\label{sec:centralized-scheduler}

We present a scheduling algorithm that  processes all  transactions pending at a round.
The algorithm is centralized in that it is aware of all the pending transactions while selecting the ones to be executed at a round. 
Throughout this Section we assume the queue-free model of autonomy of individual transactions, and the corresponding queue-free adversarial model of transaction generation.

The centralized scheduler identifies a conflict-free set of transactions pending execution that is maximal with respect to inclusion among all pending transactions.
This is accomplished by first ordering all pending transaction on the time of  generation and then processing them greedily one by one in this order.
The word `greedily' in this context means passing over a transaction only when its type includes an object that belongs to the type of a transaction already selected for execution at the current round.


\begin{figure}[t]

\hrule

\FF

\noindent
\texttt{Algorithm} \textsc{Centralized-Scheduler} 

\FF

\hrule

\FF

initialize \texttt{Pending} $\gets$ an empty list

\texttt{for} each round \texttt{do}

\begin{enumerate}[nosep]
\item[]
append all transactions generated in the previous round at the tail of list \texttt{Pending}
\item[]
initialize $\texttt{Execute}\gets$ an empty set\\
\texttt{if} \texttt{Pending} is nonempty \texttt{then} 
\begin{enumerate}[nosep]
\item[]
\texttt{repeat} 
\begin{enumerate}[nosep]
\item[(a)]
\texttt{entry} $\gets$ first unprocessed list item on \texttt{Pending}, starting from head towards tail
\item[(b)]
\texttt{if} \texttt{entry} is conflict-free with all the transactions in \texttt{Execute} \texttt{then} 
\begin{enumerate}[nosep]
\item[]
remove \texttt{entry} from \texttt{Pending} and add it to set \texttt{Execute}
\end{enumerate}
\end{enumerate}
\texttt{until} \texttt{entry} points at the tail of list \texttt{Pending}
\end{enumerate}
\item[]
execute all the transactions in \texttt{Execute}
\end{enumerate}

\FF

\hrule

\caption{\label{alg:centralized}
A pseudocode of the algorithm scheduling all pending transactions en masse.
Transactions pending execution are stored in a list \texttt{Pending} in the order of generation, with the oldest at the head.
The set \texttt{Execute} includes transactions to execute at a round.
It is selected in a greedy manner, prioritizing older transactions over newer and avoiding conflicts for access to shared objects.} 
\end{figure}

The algorithm is called \textsc{Centralized-Scheduler}, its pseudocode  is given in Figure~\ref{alg:centralized}. 
The variable \texttt{Pending} denotes a list of all pending transactions. 
At the beginning of a round, all newly generated transactions are appended to the tail of this list. 
The list is processed in the order from head to tail, which prioritizes transactions on their arrival time, such that those generated earlier  get processed before these generated later. 
The transactions already selected for execution are stored in the set \texttt{Execute}.
If a transaction in \texttt{Pending} is processed, it is checked for conflicts with transactions already placed in the set \texttt{Execute}.
If a processed transaction does not collide with any transaction already in \texttt{Execute} then it is removed from \texttt{Pending} and added to \texttt{Execute}, and otherwise it is passed over. 
After the whole list \texttt{Pending} have been scanned, all the transactions in \texttt{Execute} get executed concurrently.
No invoked transaction is aborted in the resulting execution, because conflicts of transactions are avoided by the process  to add transactions to the set \texttt{Execute}.

\begin{lemma}
\label{lem:execute-pending}

A set of transactions in \texttt{Execute} obtained at the beginning of a round makes a set of transactions that is maximal with respect to inclusion among subsets of transactions in the list \texttt{Pending} that are conflict-free.
\end{lemma}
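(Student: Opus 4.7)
The plan is to establish the two properties that constitute maximality with respect to inclusion: first that \texttt{Execute} is itself conflict-free, and second that adjoining any transaction remaining in \texttt{Pending} after the \texttt{repeat} loop would break the conflict-free property.

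For the first property, I would argue by induction on the order in which transactions are added to \texttt{Execute} during the \texttt{repeat} loop. At the moment a candidate \texttt{entry} is added, step~(b) explicitly checks that \texttt{entry} is conflict-free with every transaction already in \texttt{Execute}, so the invariant that \texttt{Execute} is conflict-free is preserved across each iteration. Since \texttt{Execute} starts empty, the invariant holds throughout, and in particular at the end of the round.

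For maximality, let \textsc{Pending}$^\star$ and \textsc{Execute}$^\star$ denote the contents of the two sets at the beginning of the round (before the loop) and after the loop, respectively. The key observation is that the loop only transfers items out of \texttt{Pending} into \texttt{Execute}, and it never removes items from \texttt{Execute}. Thus if a transaction $t$ is still in \texttt{Pending} at the end of the loop, it was scanned by step~(a) at some point, and at that moment the test in step~(b) failed, meaning $t$ collided with some transaction $t'$ already present in \texttt{Execute}. Since $t'$ stays in \texttt{Execute} until the end of the loop, $t$ still collides with a member of the final \texttt{Execute}$^\star$, so adjoining $t$ to \texttt{Execute}$^\star$ would produce a colliding pair. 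Therefore \texttt{Execute}$^\star$ is maximal inclusion-wise among conflict-free subsets of the initial list of pending transactions.

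I do not anticipate any real obstacle here: the argument is a direct unpacking of the greedy construction, with the only subtlety being the observation that transactions are never removed from \texttt{Execute} once inserted, which is what makes the witness-conflict argument from the moment of rejection still valid at the end of the loop.
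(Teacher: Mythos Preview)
Your proposal is correct and follows essentially the same argument as the paper: establish conflict-freeness via the test in step~(b), then maximality by noting that every transaction left in \texttt{Pending} was scanned and rejected because of a conflict with something already in \texttt{Execute}. Your explicit remark that \texttt{Execute} is monotone (nothing is ever removed), so the witnessing conflict persists to the end, is a small but welcome clarification that the paper leaves implicit.
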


\begin{proof}
The list \texttt{Pending} is scanned systematically, by line~(a) and the condition controlling the repeat-loop in the pseudocode in Figure~\ref{alg:centralized}.
A processed transaction is added to the set \texttt{Execute} if and only if it is free of conflicts with all the transactions already placed in \texttt{Execute}, by line~(b) in the pseudocode in Figure~\ref{alg:centralized}. 
This shows that the set \texttt{Execute} produced  after completing the repeat loop is conflict free.
 This set is maximal with this property, because a transaction not in \texttt{Execute} that could possibly be added was considered in line~(a) at some point and not added, which means a conflict was detected.
\end{proof}

The list \texttt{Pending} is updated at the very beginning of a round  by adding all newly generated transactions.
At any round in which there are pending transactions, some set of transactions gets invoked successfully, by Lemma~\ref{lem:execute-pending}.
This means each generated transaction is eventually executed.

To assess the efficiency of executing transactions, let us partition an execution of the algorithm \textsc{Centralized-Scheduler} into contiguous \emph{milestone intervals} of rounds, denoted $I_1, I_2, I_3, \dots$, such that the length of each interval equals $4b\cdot \min\{ k, \lceil\sqrt{m}\rceil \}$ rounds.

The transactions pending in a milestone interval $I_{k}$ at any time are categorized into \emph{old} and \emph{new}: the former are those generated prior to the start of $I_{k}$ and the latter are those generated during~$I_{k}$.
We interpret all the old transactions at the beginning of a round of  execution of algorithm \textsc{Centralized-Scheduler} in interval~$I_{j+1}$ as forming a \emph{transaction conflict graph}, or simply \emph{conflict graph} in this Section.
Such a graph has old transactions as vertices and two  old transactions are connected by an edge if and only if they collide.
The first round of the interval ~$I_{j+1}$ has the biggest such a graph among all the rounds in the interval.
Each following round results in pruning the graph of vertices to produce an induced subgraph.
The conflict graph evolves through a sequence of different induced subgraphs in interval~$I_{j+1}$, unless these subgraphs become empty.

Let us assume the adversary generates transactions with a rate $\rho \le \max\bigl\{\frac{1}{4k}, \frac{1}{4\lceil\sqrt{m}\rceil} \bigr\}$ and with a burstiness $b\ge1$.
The total contribution to congestion of an object by transactions generated during a milestone interval can be estimated as follows:
\begin{equation}
\label{eqn:upperbound-centralized}
   \max\bigl\{\frac{1}{4k}, \frac{1}{4\lceil\sqrt{m}\rceil} \bigr\} \cdot 4b\cdot  \min\{ k, \lceil\sqrt{m}\rceil \} +b 
 \le 
 b+b=2b.
\end{equation}
It follows that the sum of weights of transactions generated in a milestone  interval is at most $2bm$.
Since a transaction uses at least one object, the adversary can generate at most $2bm$ transactions in a milestone interval. 

We show the following invariant for all milestone intervals of an execution.

\begin{lemma}[\rm Centralized milestone invariant]
\label{lem:centralized-invariant}

If a generation rate satisfies $\rho\le  \max\bigl\{\frac{1}{4k}, \frac{1}{4\lceil\sqrt{m}\rceil} \bigr\}$, then there are at most $2bm$ pending transactions at the first round of a milestone interval, and all these transactions get executed by the end of the interval.
\end{lemma}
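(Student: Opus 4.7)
My plan is to proceed by induction on the milestone index $j$. The base case $j=1$ is vacuous: no transactions are pending at the first round of $I_1$, so $0 \le 2bm$, and the (empty) set of pending transactions is trivially executed by the end of $I_1$. For the inductive step, assume the invariant for $I_j$ and derive it for $I_{j+1}$. By the inductive hypothesis, every transaction pending at the start of $I_j$ has been executed by the end of $I_j$, so the transactions pending at the first round of $I_{j+1}$ are all among those generated during $I_j$. Combining this with the per-object congestion computation in Equation~\ref{eqn:upperbound-centralized}, which shows that any single object receives at most $2b$ contribution across an entire milestone interval, gives the bound $2bm$ on pending transactions at the start of $I_{j+1}$.

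The heart of the argument is showing that these ``old-for-$I_{j+1}$'' transactions are all executed by the end of $I_{j+1}$. The key structural observation is that algorithm \textsc{Centralized-Scheduler} scans \texttt{Pending} from head to tail and so processes older transactions strictly before newer ones; hence whenever the algorithm's scan considers an old transaction, the decision to place it in \texttt{Execute} depends only on the set of old transactions already selected at that round. In particular, the subset of old transactions that end up in \texttt{Execute} each round is exactly a maximal conflict-free subset of the currently pending old transactions, and it is built in generation order. Therefore the projection of the schedule onto old transactions is identical to running the alternative greedy algorithm of Proposition~\ref{prop:alternative-greedy} on the conflict graph $G$ of old transactions, with vertices ordered by generation time.

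It remains to bound the number of rounds needed by alternative greedy on $G$. By Proposition~\ref{prop:alternative-greedy} this is at most $\Delta(G)+1$, and the degree of any $T$ is bounded by $\sum_{o\in T}(c_o-1) \le |T|(2b-1) \le k(2b-1)$, since each object is shared with at most $2b-1$ other old transactions. In the case $k\le\lceil\sqrt m\rceil$, this immediately yields $\Delta(G)+1 \le 2bk \le 4bk = L$, and the old transactions are cleared within $I_{j+1}$. In the case $k>\lceil\sqrt m\rceil$, I would obtain the bound $L=4b\lceil\sqrt m\rceil$ by exploiting a partition of the old transactions into a ``heavy'' class with weight exceeding $\lceil\sqrt m\rceil$ and a ``light'' class with weight at most $\lceil\sqrt m\rceil$: the total weight bound $\sum_T|T|\le 2bm$ forces at most $2b\lceil\sqrt m\rceil$ heavy transactions, while every light transaction has degree strictly below $2b\lceil\sqrt m\rceil$. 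Coloring the heavy transactions with at most $2b\lceil\sqrt m\rceil$ colors first and then extending greedily to the lights (whose prior conflicting neighborhoods are small enough) produces a proper coloring with at most $4b\lceil\sqrt m\rceil=L$ colors, and this coloring bound transfers back to alternative greedy via the color-to-round correspondence in the proof of Proposition~\ref{prop:alternative-greedy}.

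The main obstacle is the case $k>\lceil\sqrt m\rceil$: there the naive $\Delta(G)+1$ bound can be as large as $2bk$, which may exceed $L=4b\lceil\sqrt m\rceil$, so a direct invocation of Proposition~\ref{prop:alternative-greedy} is too crude. The heavy/light partition above, driven by the total-weight bound $2bm$ rather than the per-transaction weight bound $k$, is the mechanism that replaces the dimension $k$ by $\sqrt m$; verifying that generation-order greedy actually realizes this refined coloring bound on the conflict graph (which is a union of at most $m$ cliques of size at most $2b$) is the step that requires the most care.
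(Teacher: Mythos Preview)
Your induction setup, the FIFO observation that old transactions are processed before new ones, the identification of the per-round selection with a round of alternative greedy on the conflict graph of old transactions, and the case $k\le\lceil\sqrt m\rceil$ are all exactly as in the paper.

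The gap is in your plan for the case $k>\lceil\sqrt m\rceil$. You propose to \emph{exhibit} a coloring---heavy transactions first, each in its own class, then extend greedily to the lights---and then ``transfer it back'' to the actual execution via Proposition~\ref{prop:alternative-greedy}. That transfer does not go through: Proposition~\ref{prop:alternative-greedy} only says alternative greedy equals primary greedy in the \emph{given} (generation) order, and both use at most $\Delta+1$ colors; it does not say alternative greedy matches an arbitrary hand-built coloring that processes vertices in a different (heavy-first) order. So the existence of a good heavy-first coloring says nothing, by itself, about how many rounds generation-order greedy needs. You flag this yourself in the last paragraph, and it is indeed the real issue.

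The paper closes this gap differently, by partitioning the \emph{rounds} rather than the vertices. Call a round in $I_{j+1}$ \emph{heavy} if at least one heavy old transaction is executed in it, and \emph{light-only} otherwise. There are at most $2b\lceil\sqrt m\rceil$ heavy rounds, since each one removes at least one heavy and there are at most $2b\lceil\sqrt m\rceil$ heavies by the total-weight bound. In a light-only round the selected set \texttt{Execute} (restricted to old transactions) contains only lights and is maximal conflict-free among \emph{all} pending old transactions, hence in particular maximal among the pending lights; so the sequence of light-only rounds is literally a run of alternative greedy on (an induced subgraph of) the light conflict graph, whose maximum degree is at most $(2b-1)\lceil\sqrt m\rceil$. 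Proposition~\ref{prop:alternative-greedy} then bounds the number of light-only rounds by $(2b-1)\lceil\sqrt m\rceil+1$, and the sum $2b\lceil\sqrt m\rceil+(2b-1)\lceil\sqrt m\rceil+1=(4b-1)\lceil\sqrt m\rceil+1\le 4b\lceil\sqrt m\rceil$ finishes the case. This round-partition argument is what you should replace your ``transfer back'' step with.
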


\begin{proof}
The invariant pertains to at most $2bm$ old pending transactions.
We show the centralized milestone invariant by induction on the index of interval.

The base case concerns interval $I_1$.
The invariant holds because there are no transactions generated prior to this interval.
To show the inductive step,  suppose it holds for an interval~$I_j$ and consider interval $I_{j+1}$.
By the inductive assumption, all the old transactions in~$I_{j+1}$ got generated during interval~$I_j$.
It follows that there are at most $2bm$ old transactions at the beginning of~$I_{j+1}$.
We show next that all the old transactions pending in such an interval~$I_{j+1}$ get executed by the end of~$I_{j+1}$.

An execution of algorithm \textsc{Centralized-Scheduler} during interval~$I_{j+1}$ can be interpreted as an execution of the alternative greedy coloring of the conflict graph, as the graph is determined at the first round of~$I_{j+1}$, by Lemma~\ref{lem:execute-pending}.
The consecutive round numbers in interval~$I_{j+1}$ could be interpreted as colors.
By Proposition~\ref{prop:alternative-greedy}, the number of assigned colors is at most the maximum degree plus~$1$.
The old transactions make a prefix of the list of all pending transactions, by how new  transactions get added to list \texttt{Pending}, by the pseudocode in Figure~\ref{alg:centralized}.
This means that new transactions get considered at a round only after all the old transactions had a chance to be scheduled and added to \texttt{Execute}, and so they do not get in the way of old transactions.
The maximum assigned color is the last round in which all old transactions get completed.

Since each transaction uses at most $k$ objects and each object belongs to at most $2b$ old transactions, by the bound~\eqref{eqn:upperbound-centralized}, each old transaction collides with at most $(2b-1)k$ other old transactions. 
It follows that the maximum degree of the conflict graph  is at most $(2b-1)k$.
The alternative greedy coloring assigns at most $(2b-1)k+1$ colors. 
This is also an upper bound on the number of rounds spent to complete executing all old transactions.

To show the inductive step, it suffices to demonstrate that the length of interval $I_{j+1}$ is an upper bound on the number of colors assigned to the vertices of a graph of old transactions by the greedy coloring.
We consider two cases, depending on the relative magnitude of $k$ and $\lceil\sqrt{m}\rceil$.

Suppose first that $k\le\lceil\sqrt{m}\rceil$.
The  length of each milestone interval  is $4bk$, which is strictly greater than the maximum vertex degree. 
This is because the inequality 
\[
4bk>(2b-1)k+1=2bk-k+1
\]
 is equivalent to $2bk>-k+1$, which holds since both $b\ge 1$ and $k\ge 1$.
 
The other case is $k>\lceil\sqrt{m}\rceil$.
Let us call an old transaction \emph{heavy} if its weight is greater than $\lceil \sqrt{m}\rceil$ and \emph{light} if its weight is at most~$\lceil\sqrt{m}\rceil$.
There are at most $2b \lceil\sqrt{m}\rceil$ heavy old transactions, as otherwise the total weight of old transactions would be strictly greater than  $2b \lceil\sqrt{m}\rceil \cdot \sqrt{m}\ge 2bm$,  which is impossible.
Suppose conservatively that if a heavy transaction is scheduled to be executed at a round then this is the only transaction executed at this round.
There are at most $2b \lceil\sqrt{m}\rceil$ such rounds.
The remaining rounds in the interval execute light transactions only.
We interpret these rounds as belonging to an execution of the alternative greedy coloring.
The subgraph induced by light transactions has maximum degree at most $(2b-1)\lceil\sqrt{m}\rceil$, and so  at most $(2b-1)\lceil\sqrt{m}\rceil+1$ colors get assigned, each color representing a round.
To combine the outcomes of these two counts, there are at most $2b \lceil\sqrt{m}\rceil$ rounds needed to execute heavy transactions and at most $(2b-1)\lceil\sqrt{m}\rceil+1$ rounds to execute light transactions, for a total of these many rounds:
\[
2b \lceil\sqrt{m}\rceil + (2b-1)\lceil\sqrt{m}\rceil+1=(4b - 1) \lceil\sqrt{m}\rceil+1
\ .
\]
The length of a milestone interval is $4b\cdot\lceil\sqrt{m}\rceil$ rounds.
By its end all old transactions are completed, because the inequality
\[
(4b - 1) \lceil\sqrt{m}\rceil+1\le 4b\cdot\lceil\sqrt{m}\rceil
\]
is equivalent to $\lceil\sqrt{m}\rceil\ge 1$.
This completes the proof of the inductive step of the centralized milestone invariant.
\end{proof}

We show next that algorithm \textsc{Centralized-Scheduler} is stable and has bounded transaction latency for suitably low transaction generation rates.

\begin{theorem}
\label{thm:centralized-scheduler}

If algorithm \textsc{Centralized-Scheduler} is executed against an adversary of type $(\rho,b)$, such that each generated transaction accesses at most  $k$ objects out of $m$ shared objects available and transaction-generation rate $\rho$ satisfies $\rho\le    \max\bigl\{\frac{1}{4k}, \frac{1}{4\lceil\sqrt{m}\rceil} \bigr\}$, then the number of pending transactions at a round is at most $4bm$ and  transaction latency is at most $8b\cdot \min\{ k, \lceil\sqrt{m}\rceil \}$.
\end{theorem}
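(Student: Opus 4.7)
The plan is to derive both bounds as straightforward corollaries of the centralized milestone invariant (Lemma~\ref{lem:centralized-invariant}) together with the per-interval congestion cap already established in equation~\eqref{eqn:upperbound-centralized}. Since the heavy lifting -- showing that all old transactions get cleared inside a single milestone interval -- has already been done, I only need a short bookkeeping argument separating old transactions from new ones and then adding the two counts.

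For the queue-size bound, I fix an arbitrary round $r$ and let $I_j$ be the milestone interval containing $r$. Every transaction pending at round $r$ was either generated before $I_j$ (an old transaction for $I_j$) or during $I_j$ up to and including round $r$ (a new transaction for $I_j$). Lemma~\ref{lem:centralized-invariant} bounds the count of old transactions by $2bm$. For the new transactions, equation~\eqref{eqn:upperbound-centralized} bounds the total congestion added per object during one milestone interval by $2b$, so the total weight of new transactions generated during $I_j$ is at most $2bm$; since each transaction has weight at least one, the number of new transactions is at most $2bm$ as well. Summing the two yields the desired bound of $4bm$ pending transactions at round~$r$.

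For the latency bound, I consider an arbitrary transaction $T$ generated at some round in a milestone interval $I_j$. Then $T$ is old with respect to the next interval $I_{j+1}$, and by Lemma~\ref{lem:centralized-invariant} applied to $I_{j+1}$ (whose hypothesis is satisfied because the count of old transactions at the start of $I_{j+1}$ is at most $2bm$), $T$ is executed by the final round of $I_{j+1}$. Thus the delay of $T$ is at most the number of rounds from its generation to the end of $I_{j+1}$, which is bounded by the combined length of $I_j$ and $I_{j+1}$, namely $2 \cdot 4b\cdot \min\{k,\lceil\sqrt{m}\rceil\} = 8b\cdot \min\{k,\lceil\sqrt{m}\rceil\}$.

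The main obstacle is essentially bookkeeping: making sure the invariant's hypothesis is actually satisfied at the start of $I_{j+1}$, so that it can be applied inductively, and making sure the counting of new transactions respects that a transaction's weight is at least one (otherwise the translation from congestion to number of transactions would fail). Both points are immediate from the definitions, so I expect no real technical difficulty beyond assembling the inequalities cleanly.
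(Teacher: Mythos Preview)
Your proposal is correct and follows essentially the same route as the paper's own proof: split pending transactions at a round into old and new, bound each by $2bm$ using Lemma~\ref{lem:centralized-invariant} and the observation (already made just before that lemma) that at most $2bm$ transactions can be generated in one milestone interval, and then bound latency by two interval lengths. The extra remarks you make about verifying the invariant's hypothesis at the start of $I_{j+1}$ and about each transaction having weight at least one are sound and mirror what the paper does implicitly.
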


\begin{proof}
To estimate the number of transactions pending at a round, let this round belong to a milestone interval~$I_k$.
The number of old transactions at any round of $I_k$ is at most $2mb$, by the centralized milestone invariant formulated as Lemma~\ref{lem:centralized-invariant}.
During the interval~$I_k$, at most $2mb$ new transactions can be generated.
So $2mb+2mb=4mb$ is an upper bound on the number of pending transactions at the round.

To estimate transaction latency, we use the property that a transaction generated in a milestone interval gets executed by the end of the next interval, again by the centralized milestone invariant formulated as Lemma~\ref{lem:centralized-invariant}.
This means that transaction latency is at most twice the length of a milestone  interval, which is $2\cdot4b\cdot \min\{ k, \lceil\sqrt{m}\rceil \}= 8b\cdot \min\{ k, \lceil\sqrt{m}\rceil \}$.
\end{proof}

\section{A Distributed Scheduler}

\label{sec:distributed-scheduler}

We now consider distributed scheduling.
Let a distributed system consist of $n$ processors. 
The processors issue threads that communicate among themselves through some $m$ shared objects.
Every transaction type includes at most $k$ objects.

Each generated transaction is assigned to a specific processor and resides in its local queue while pending execution.
This means we consider the queue-based model of autonomy of individual transactions, and the corresponding queue-based adversarial model of transaction generation.

We employ a specific communication mechanism between a pair of processors.
One of the processors, say $s$, is a sender and the other processor, say~$r$, is a receiver.
The two processors $s$ and $r$ communicate through a designated object~$o$.
Communication occurs at a given round.
All the processors are aware that this particular round is a round of communication from $s$ to~$r$.
Each of the participants $s$ and $r$ may invoke a transaction involving object~$o$ at the round, while at the same time all the remaining processors pause and do not invoke any transactions at this round.

Assume first that both $s$ and~$r$ have pending transactions that access  object~$o$.
At a round of communication, the recipient processor~$r$ invokes a transaction~$t_r$ that uses object~$o$.
If the sender processor~$s$ wants to convey bit~$1$ then $s$ also invokes a transaction~$t_s$ that uses object~$o$.
In this case, both transactions~$t_r$ and~$t_s$ get aborted, so that the processor~$r$ receives the respective feedback from the system and interprets it as receiving~$1$.
If the sender processor~$s$ wants to convey bit~$0$ then $s$ does not invoke any transactions using object $o$ at this round.
In this case, transaction~$t_r$ gets executed successfully, so that $r$ receives the respective feedback from the system and interprets it as receiving~$0$.
This is how one bit can be transmitted successfully from a sender~$s$ to a recipient~$r$.

That was an example of a perfect cooperation between a sender and receiver, but alternative scenarios are possible as well.
Suppose that the sender~$s$ has a pending transaction using object~$o$ and wants to communicate with~$r$ but the recipient~$r$ either does not want to communicate or does not have a pending transaction using object~$o$.
What occurs is that $s$ invokes a suitable transaction $t_s$ which gets executed but $r$ does not receive any information.
Alternatively, suppose that the receiver~$r$ has a pending transaction using object~$o$ and wants to communicate while the sender~$s$ either does not want to communicate or does not have a pending transaction using object~$o$. 
What occurs is that the receiver~$r$ invokes a suitable transaction~$t_r$ which gets executed, which the receiver~$r$ interprets as receiving the bit~$0$. 

That communication mechanism can be extended to transmit the whole type of any  transaction in the following way.
The type identifies a subset of all $m$ objects.
Having a fixed ordering of the objects, the type can be represented as a sequence of $m$ bits, in which $1$ at position $i$ represents that the $i$th object belongs to the type, and $0$ represents that the $i$th object does not belong.
A processor $s$ can transmit a transaction type to recipient $r$ by transmitting $m$ bits representing the type in $m$ successive rounds while using some  designated object~$o$.
We say that by this operation processor~$s$ \emph{sends the transaction type to processor~$r$ via object~$o$}.
This operation works as desired assuming each of the processors has at least $m$ transactions involving object~$o$.
If at least one of these processors either does not have $m$ transactions involving object~$o$ or does not want to participate, then either no bits are  transmitted, or the receiver~$r$ possibly receives a sequence of~$0$s only, which it interprets as no type of transaction successfully transmitted.

Pending transactions at a processor are grouped by their types.
All pending transactions of the same type at a processor make a \emph{block of transactions} of this type.
The \emph{weight of a block} is defined to be the weight of its type.
If there are sufficiently many transactions in a block then the block and the type are said to be \emph{large}.
A boundary number defining sizes to be large is denoted by $L$ and equals $L=(n-1)^2 n^2 m^2$.
If the number of transactions of some type in a queue at a processor is  at least $kL$ but less than $(k+1)L$, for a positive integer~$k$, then we treat these transactions as contributing $k$ large blocks.

An execution of the scheduling algorithm is partitioned into epochs, and each consecutive epoch consists of three phases, labeled Phase~1, Phase~2, and Phase~3.
Each phase is executed the same number of $L=(n-1)^2 n^2 m^2$ rounds.
The algorithm is called \textsc{Distributed-Scheduler} and its pseudocode is given in Figure~\ref{alg:distributed}.

In the beginning of Phase~1, each processor~$v$ that has a large block of transactions of some type, selects one such a block, and this type then is \emph{active} at the processor in the epoch.
A processor that starts Phase~1 with an active type is called \emph{active} in this phase.
Processors store large blocks in the order of generation of their last-added transaction.
Each processor chooses as active a large block that comes first in this order.

The purpose of Phase~1 is to spread the information of active types of all the active processors as widely as possible.
Each active processor uses transactions of its active type for communication.
Such communication involves executing transactions, so a block of transactions of a given type may gradually get smaller.
Once a type of a large block becomes active in the beginning of Phase~1, it stays considered as active for the durations of an epoch, even if the number of transactions in the block becomes less than~$L$.
Phase~1 assigns segments of $(n-1) n^2 m^2$ rounds for each pair of processors $s$ and~$r$ and each object~$o$ to spend with~$s$ acting as sender to~$r$ acting as receiver with communication performed via object~$o$.


\begin{figure}[t]

\hrule

\FF

\noindent
\texttt{Algorithm} \textsc{Distributed-Scheduler} 

\FF

\hrule

\FF

\texttt{Phase 1} : {\em sharing information about large active blocks during $L$ rounds}

\begin{enumerate}[nosep]
\item[] repeat $n-1$ times 
\begin{enumerate}[nosep]
\item[]
\texttt{for} each sender processor $s$ and each recipient processor $r$ and each object $o$ \texttt{do}
\begin{enumerate}[nosep]
\item[]
in a segment of rounds assigned for this selection of $s$, $r$, and $o$:
\begin{enumerate}[nosep]
\item[]
\texttt{if} $v$ is active \texttt{and} this is a round when $s=v$ \texttt{then} 
\begin{itemize}[nosep]
\item[]
act as sender to transmit all relevant information to $r$ via object $o$ 
\end{itemize}

\item[]
\texttt{elseif} $v$ is active \texttt{and} this is a round when $r=v$ \texttt{then} 
\begin{itemize}[nosep]
\item[]
act as recipient to receive all relevant information from $s$ via object $o$
\end{itemize}
\end{enumerate}
\end{enumerate}
\end{enumerate}
\end{enumerate}
\texttt{Phase 2} : {\em executing large blocks of transactions during $L$ rounds}

\begin{enumerate}[nosep]
\item[]
\texttt{if} $v$ is active \texttt{then}
\begin{enumerate}[nosep]
\item[]
select active blocks for execution among those learned in Phase~1 
\end{enumerate}
\texttt{if} $v$ is active \texttt{and} its active block got selected  \texttt{then} 
\begin{enumerate}[nosep]
\item[]
\texttt{for} each among $L$ consecutive rounds \texttt{do}
\begin{enumerate}[nosep]
\item[]
\texttt{if} there is a transaction of the active type in the queue \texttt{then} 
\begin{enumerate}[nosep]
\item[]
invoke such a transaction 
\end{enumerate}
\end{enumerate}
\end{enumerate}
\end{enumerate}
\texttt{Phase 3} : {\em executing remaining transactions by solo processors in $L$ rounds}
\begin{enumerate}[nosep]
\item[] for $L$ consecutive rounds
\begin{enumerate}[nosep]
\item[]
\texttt{if} this is a round among $L/n$ ones assigned to $v$ \texttt{then}

\begin{enumerate}[nosep]
\item[]
\texttt{if} the queue is nonempty \texttt{then} invoke a transaction 
\end{enumerate}
\end{enumerate}
\end{enumerate}
\FF

\hrule

\caption{\label{alg:distributed}
A pseudocode of an epoch for a processor~$v$.
Pending transactions are dispersed among the processors.
Number~$L=(n-1)^2 n^2 m^2$ is the duration of each phase.
In Phase~1, processors $s$ and $r$ use transactions from their active large blocks to implement communication.
A sender processor~$s$ transmits the active type for each processor it knows about. 
In Phase~2, large active blocks are selected for execution in a greedy manner, with blocks ordered by the processors' names.
In Phase~3, each processor gets assigned a unique exclusive contiguous segment of $L/n$ rounds, in which to execute up to $L/n$ transactions from its queue in a first-in first-out manner.
}
\end{figure}

Phase 2 is spent on executing transactions in some active blocks selected such that they do not create conflicts for access to shared objects. 
In the beginning of Phase~2, each processor computes a selection of active large blocks of transactions to execute in Phase~2 among those learned in Phase~1. 
This common selection is computed greedily as follows.
The active types learned in Phase~1 are ordered by the owners' names.
There is a working set of active types selected for execution, which is initialized empty.
The active types are considered one by one.
If a processed active type can be added to the working set without creating a conflict for access to an object, then the type is added to the set, and otherwise it is passed over.
This computation is performed locally by each active processor at the beginning of the first round of Phase~2 and each active processor obtains the same output.
The rounds of Phase~2 are spent on executing the transactions of the active blocks selected for execution.
An active processor whose active large block has been selected executes pending transactions in its selected active block as long as some transactions from the block are still available in the queue or Phase~2 is over, whichever happens earlier.

Phase~3 is spent by each processor executing solo its pending transactions, those that have never been included in large blocks.
Each processor is assigned a unique exclusive contiguous segment of $L/n=(n-1)^2 n m^2$ rounds to execute such transactions.
Transactions are performed in the order of their adding to the queue, with those waiting longest executed before those generated later.

Let $P= \sum_{i=1}^k \binom{m}{i}$ be  the number of possible different transaction types in a system of $m$ shared objects such that a type includes at most $k$ objects. 
We will use the estimate $P \le 2^{\text{H}(\frac{k}{m})\, m}$, for $k\le \frac{m}{2}$, where $\text{H}(x)$ is the binary entropy function $\text{H}(x)=x\lg x + (1-x)\lg (1-x)$  for $0<x<1$.

We partition the rounds of an execution of algorithm \textsc{Distributed-Scheduler} into contiguous \emph{milestone intervals} denoted $I_1, I_2, \ldots$.
Each milestone interval consists of $2 b n P \cdot \min\{ k, \lceil\sqrt{m}\rceil \} $ epochs.
Alternatively, a milestone interval consists of $6 b n L P\cdot \min\{ k, \lceil\sqrt{m}\rceil \} $ rounds, after translating the lengths of epochs into rounds. 

The transactions pending in an interval $I_{k}$ at any time are categorized into \emph{old} and \emph{new}: the former are those generated prior to the start of $I_{k}$ and the latter are those generated during~$I_{k}$.
We also apply this terminology to large blocks of transactions, and categorize them accordingly: a \emph{large old block} of transactions is a large block that consists of only old transactions. 

We define a simple graph called \emph{block conflict graph}, or simply a \emph{conflict graph} in this Section.
Vertices are specified to be pairs $(v,S)$ where $v$ is a processor and $S$ is a large old block at~$v$.
Two vertices are connected by an edge if either they share the same  processor name, as its  first coordinate, or the types of large old blocks, in the second coordinates, share an object. 
Observe that, for each processor~$v$, a subgraph induced by all vertices with the same first coordinate~$v$ is a clique.

The block conflict graph allows to interpret the range of communication during Phase~1 in an epoch.
Namely, Phase~1 results in every processor~$v$ learning the active types in all vertices of the connected component of the block conflict graph to which vertices $(v,S)$ belong.
The duration of a phase $L$ is determined such as to have sufficiently many rounds to accomplish this goal.

\begin{lemma}
\label{lem:communication}

After completion of Phase~1, every processor~$v$  knows the active transaction type  of each processor in the  connected component of the block conflict graph to which vertices with the first coordinate~$v$ belong.
\end{lemma}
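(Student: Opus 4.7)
The plan is to argue by induction on the iteration index $i \in \{0, 1, \ldots, n-1\}$ of the outer \texttt{repeat} loop of Phase~1, establishing the invariant that after the $i$-th iteration every processor $v$ knows the active type of every processor $u$ such that some vertex $(v, S)$ is at distance at most $i$ in the block conflict graph from some vertex $(u, T)$, where clique edges within a single processor are counted as costing zero. Since the connected component of $(v, S)$ contains at most $n$ distinct processors, its diameter under this weighting is at most $n-1$, so the claim holds at the end of the $(n-1)$-st iteration.

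Before starting the induction I would verify that the one-bit primitive remains available throughout Phase~1 so every segment can carry its share of information. Each active block begins the phase with at least $L = (n-1)^2 n^2 m^2$ pending transactions of its active type. Processor $v$ participates in at most $2 n m$ segments per outer iteration and in $2 (n-1) n m$ segments over all of Phase~1; each segment has length $(n-1)m$ and consumes at most one transaction of a designated type per round. The total consumption is therefore at most $2(n-1)^2 n m^2 < L$, so no active block becomes empty before Phase~1 ends. I would also note that each segment of $(n-1) m$ rounds can carry up to $n-1$ transaction-type descriptors of $m$ bits each, which is exactly what a forwarding processor needs to transmit in the inductive stage.

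The inductive step rests on the following observation: whenever $(v, S)$ and $(v', S')$ are adjacent via a cross-processor edge in the block conflict graph with shared object $o$, the inner loop of every outer iteration executes the segment with $s = v'$, $r = v$, and this object $o$; both $v$ and $v'$ hold large blocks whose types contain $o$ and hence can supply transactions for the primitive, so that segment delivers to $v$ everything $v'$ knows. For a processor $u$ at weighted distance $i+1$ from $v$ in the component, take a shortest block-conflict-graph path from $(v, S)$ to some $(u, T)$; the vertex immediately preceding $(u, T)$ on this path has first coordinate $v'$ at weighted distance $i$ from $v$, so by the inductive hypothesis applied to $v'$ after iteration $i$ processor $v'$ already knows $u$'s active type, and the cross-processor argument above hands it to $v$ during iteration $i+1$. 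Clique edges within a single processor contribute nothing to the weighted distance and are taken care of by the owner's local knowledge of its own blocks.

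The step I expect to require the most care is reconciling the block conflict graph, whose edges may pass through large blocks that are not the selected active ones, with the pseudocode's rule that an active processor communicates using transactions of its active type. A cross-processor edge may be witnessed by a non-active block of one endpoint, and then a direct segment might not be executable if neither active type contains the corresponding object. I would resolve this by observing that every large block, active or not, still holds at least $L$ pending transactions of its type and therefore can supply the primitive for any segment whose object lies in that type; combined with the exhaustive enumeration over all triples $(s, r, o)$ in Phase~1, this guarantees that every cross-processor edge of the block conflict graph is actually traversable during each outer iteration, and the induction closes as claimed.
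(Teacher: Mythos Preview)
Your proposal is correct and follows the same underlying idea as the paper---information about active types propagates across the connected component in at most $n-1$ hops, one per iteration of the outer \texttt{repeat} loop, with the exhaustive enumeration over $(s,r,o)$ triples guaranteeing that every relevant edge is exercised---but your treatment is considerably more careful than the paper's. The paper's proof is essentially a terse counting argument: it notes that the bit-transmission primitive is sound, then observes that $L$ equals the product of the factors $(n-1)$ (hops), $n(n-1)$ (ordered pairs), $m$ (objects), $n$ (types to forward), and $m$ (bits per type), and declares this sufficient. It does not spell out the induction, does not verify that active blocks remain non-empty throughout Phase~1, and does not discuss the issue you flag in your last paragraph.

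That last point deserves emphasis: you correctly observe that a cross-processor edge of the block conflict graph may be witnessed by a \emph{non-active} large block at one or both endpoints, so that the shared object~$o$ need not lie in either endpoint's active type. The figure caption states that Phase~1 communication uses transactions from the active blocks, which on a literal reading would leave such an edge untraversable. The paper's proof simply does not engage with this; your resolution---allowing the primitive to draw on any large block containing~$o$, which is consistent with the prose description of the primitive and with the pseudocode line ``act as sender \ldots\ via object~$o$''---is the natural fix and is what the lemma actually requires. Your depletion bound $2(n-1)^2 n m^2 < L$ then needs to be checked per block rather than per processor, but since a block of type~$S$ is touched only in segments with $o\in S$ the same style of estimate goes through.
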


\begin{proof} 
In the beginning of an epoch, each processor having a large block selects one large block of transactions  as active.
A block's type can be encoded by a string of bits, of which at least one is a~$1$, as $1$s indicate objects in the type.
The use of an object for communication for each sender-receiver pair is exclusive.
If a receiving processor obtains a string of bits that includes at least one occurrence of~$1$, then this is a legitimate type.
Otherwise, if a receiver does not receive any sequence of bits or it decodes a transmission of a type as a sequence of only $0$s, then this means that no type was communicated.
This demonstrates the correctness of transmitting a type between a pair of processors. 

Next, we show that the length~$L$ of Phase~1 is sufficiently large for the relevant information to successfully propagate to reach every processor.
Any piece of information will need to be transmitted at most $n-1$ times, as there are $n$ processors.
The number of ordered pairs of a sender and a receiver is $n(n-1)$.
A pair of communicating processors may use one among $m$ shared objects at an instance of communication.
The information to propagate pertains to each of the $n$ processors.
A type is encoded by $m$ bits.
The length of Phase~1 is a product of all these numbers, so its value allows for the needed communication to propagate.
\end{proof}

\begin{lemma}
\label{lem:execute-large}

A set of active types obtained in the beginning of Phase~2 by an active  processor is maximal with respect to inclusion among subsets of active types, in the processor's connected component in the block conflict graph, that are conflict free.
\end{lemma}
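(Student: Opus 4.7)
The plan is to mirror the argument used for Lemma~\ref{lem:execute-pending}, adapting it from the centralized list of pending transactions to the greedy computation each active processor performs locally in the beginning of Phase~2 over the active types it has learned. The proof has three ingredients: (i) each active processor actually has access to the relevant active types in its connected component of the block conflict graph, (ii) the set it computes is conflict free, and (iii) it is maximal with respect to inclusion.

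For (i), I would first invoke Lemma~\ref{lem:communication}: after Phase~1, an active processor~$v$ knows the active type of every processor lying in the same connected component of the block conflict graph as a vertex with first coordinate~$v$. Thus the input that $v$ feeds into its local greedy selection is exactly the collection of active types in its connected component. Moreover, because every processor in that component sees the same set of active types and processes them in the common order determined by the processors' names, the output of the greedy computation is the same at each such processor, which makes the set well defined as "the set obtained by an active processor."

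For (ii) and (iii), I would argue along the same lines as Lemma~\ref{lem:execute-pending}. The greedy procedure initializes a working set to be empty and scans the active types in the fixed order of owners' names. An active type is added to the working set precisely when it does not create a conflict for access to any object with any active type already in the working set; otherwise it is passed over. Conflict freeness of the final set is then immediate from the acceptance test, since each newly added type was, at its moment of insertion, conflict free with all previously added types, and no type is ever removed. For maximality, suppose some active type $T$ in the processor's connected component is not in the final set. Then $T$ was examined by the greedy scan and rejected, which means at that moment there was already a type $T'$ in the working set colliding with $T$ on some shared object. Since $T'$ is never removed afterwards, $T'$ is still present in the final set, so adjoining $T$ to the final set would break conflict freeness; hence no strictly larger conflict-free subset of active types in the connected component contains the final set.

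No step is really a genuine obstacle: the only point that requires care is noting that "the set obtained by an active processor" is a meaningful object, which is what Lemma~\ref{lem:communication} together with the common ordering by processor names delivers. Once that is in place, conflict freeness and maximality are standard consequences of the greedy construction, exactly as in the centralized case.
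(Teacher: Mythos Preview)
Your argument is correct for the lemma as stated, and its first half coincides with the paper's: both invoke Lemma~\ref{lem:communication} to ensure every active processor in a connected component sees the same list of active types, then note that running the same greedy scan in the same order yields the same output at each such processor. Your explicit verification of conflict-freeness and maximality via the acceptance test is the standard greedy argument, mirroring Lemma~\ref{lem:execute-pending}; the paper leaves this step implicit.

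Where the paper's proof differs is that its second paragraph argues something not asserted in the lemma statement but used downstream in Lemma~\ref{lem:distributed-invariant}. It observes that the vertices $(v,S)$ sharing a first coordinate $v$ form a clique in the block conflict graph, so every non-active large old block at an active processor $v$ is adjacent to $v$'s active block. Hence a maximal independent set among the \emph{active} vertices is automatically a maximal independent set in the \emph{entire} block conflict graph. This is the fact that lets Lemma~\ref{lem:distributed-invariant} interpret Phase~2 across epochs as the alternative greedy coloring of the full block conflict graph. Your proof establishes exactly what the lemma claims and omits this extra observation; that is not a gap in the proof of Lemma~\ref{lem:execute-large} itself, but you should be aware that the clique argument is what bridges the lemma to its later use.
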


\begin{proof}
Each active processor knows all the active types of processors in its connected component of the block conflict graph, by Lemma~\ref{lem:communication}.
All the active processors in a connected component work with the same list of active types ordered by the names of active processors.
Each processor selects types of blocks greedily based on the same rules of selecting blocks to add to a working set of blocks to execute.
So each active processor in a connected component produces the same list of active types.

Each active large block~$S$ at a processor~$v$ is represented as a pair $(v,S)$.
This pair belongs to a clique induced of all such pairs that share the first component.
The sub-graph of the block conflict graph induced by the pending large old blocks can be partitioned into such cliques.
Each vertex $(v,S)$ in the conflict graph for an active vertex~$v$ is a neighbor of a vertex $(v,S')$ with an active large old block~$S'$.
It follows that a maximal independent set among the active vertices is also such in the whole conflict graph.
\end{proof}

Let us assume that the adversary generates transactions with a rate $\rho< \max\bigl\{ \frac{1}{6k},\frac{1}{6\lceil\sqrt{m}\rceil}\bigr\}$ and with a burstiness $b\ge1$.
Let us call the number $nLP$ the \emph{bulk} of the system.
The contribution to congestion of any object and of any processor by the transactions generated during a milestone interval, with a generation rate~$\rho$, can be bounded above as follows:
\begin{equation}
\label{eqn:upperbound-distributed}
 \rho\cdot 6 b n L  P \cdot \min\{ k, \lceil\sqrt{m}\rceil \} +b 
 \le bnLP
 \ ,
\end{equation}
assuming the bulk of the system $nLP$ is at least $ \frac{1}{1- \,6\rho \min\{ k, \lceil\sqrt{m}\rceil \} }$.
We refer to this assumption about he bulk of the system by saying that {\em the bulk of the system is sufficiently large for a generation rate~$\rho$}, where the generation rate $\rho$ satisfies $\rho< \max\bigl\{ \frac{1}{6k},\frac{1}{6\lceil\sqrt{m}\rceil}\bigr\}$, for the given parameters $m$ and~$k$.

A large block at a processor contributes $L$ to the congestion of every object of its type, and also a priori to the congestion of the processor.
The number of large blocks contributing to congestion of a processor is this \emph{processor's block congestion}.
The number of large blocks contributing to congestion of an object is this \emph{object's block congestion}.

The contribution to a block congestion of a processor during a milestone interval is at most $bnP=C$, by~\eqref{eqn:upperbound-distributed}, if only the bulk of the system is sufficiently large for a generation rate~$\rho$, where $\rho< \max\bigl\{ \frac{1}{6k},\frac{1}{6\lceil\sqrt{m}\rceil}\bigr\}$.
Similarly, the contribution to a block congestion of an object during a milestone interval is at most $bnP=C$, by~\eqref{eqn:upperbound-distributed}, if only the bulk of the system is sufficiently large for a generation rate~$\rho$, where $\rho< \max\bigl\{ \frac{1}{6k},\frac{1}{6\lceil\sqrt{m}\rceil}\bigr\}$.
The sum of block congestions generated in an interval over all objects is at most $Cm$.
A large block uses at least one object, so the adversary can generate  at most $Cm$ large blocks in a milestone interval. 

Next, we show the following invariant for all milestone intervals of an execution of algorithm \textsc{Distributed-Scheduler}.

\begin{lemma}[\rm Distributed milestone invariant]
\label{lem:distributed-invariant}

For a generation rate $\rho< \max\bigl\{ \frac{1}{6k},\frac{1}{6\lceil\sqrt{m}\rceil}\bigr\}$, and assuming the bulk of the system is sufficiently large with respect to~$\rho$, there are at most $bn^5 m^3 P$ pending transactions at a first round of every milestone interval, and all these transactions  get executed by the end of the interval.
\end{lemma}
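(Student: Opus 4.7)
The plan is to prove the invariant by induction on the interval index, mirroring the structure of Lemma~\ref{lem:centralized-invariant}. The base case is trivial since interval $I_1$ begins with no transactions. For the inductive step, suppose the invariant holds for $I_j$. By the inductive hypothesis every old transaction at the start of $I_j$ has been executed by the end of $I_j$, so the old transactions at the start of $I_{j+1}$ are exactly the leftovers of what was generated during $I_j$. To bound their number, I would use the object-block-congestion bound to conclude that at most $Cm=bnmP$ new large blocks arise during $I_j$, contributing at most $CmL$ transactions in large blocks; the residual non-large transactions contribute fewer than $L$ per (processor, type) pair, i.e.\ at most $nPL$ in total. Combining the two and invoking $L\le n^4 m^2$ gives the bound $bn^5 m^3 P$ on pending transactions.

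The more substantial part is to show that every old transaction is executed before the end of $I_{j+1}$. I would split old transactions into (i) those belonging to large old blocks, which are the target of Phase~2, and (ii) small-block residuals, which are the target of Phase~3. For (i), build the block conflict graph $G$ on vertices $(v,S)$ with $v$ a processor and $S$ one of its large old blocks, with edges for shared processor and for types sharing an object. Lemmas~\ref{lem:communication} and~\ref{lem:execute-large} imply that each epoch's Phase~2 drains, in lockstep across a connected component, a maximal conflict-free selection of currently-active large old blocks, and each drained block vanishes in a single Phase~2 because Phase~2 has $L$ rounds. Interpreting epochs as the colors of the alternative greedy coloring of $G$ and applying Proposition~\ref{prop:alternative-greedy}, the number of epochs required is at most $\Delta(G)+1$. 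The per-processor clique contributes at most $C-1$ to $\Delta(G)$ and the $\le k$ objects of a block contribute at most $k(C-1)$ more, yielding $\Delta(G)\le (k+1)C-2$. When $k\le\lceil\sqrt{m}\rceil$ this is below the available $2Ck$ epochs of the milestone. When $k>\lceil\sqrt{m}\rceil$, I would repeat the heavy/light dichotomy of the centralized proof: there are at most $C\lceil\sqrt{m}\rceil$ heavy blocks (weight exceeding $\lceil\sqrt{m}\rceil$) since total block-weight is at most $Cm$, and the subgraph induced by light blocks has maximum degree at most $(\lceil\sqrt{m}\rceil+1)C$, giving a total of at most $2C\lceil\sqrt{m}\rceil$ epochs, which again matches the milestone length.

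For (ii), Phase~3 gives every processor exactly $L/n$ FIFO rounds per epoch, summing to $2bPL\min\{k,\lceil\sqrt{m}\rceil\}$ rounds across $I_{j+1}$. The residual small old transactions at any processor number fewer than $PL$, and because Phase~3 serves the queue head-first, any small old transaction not absorbed by Phase~2 is served before newly arriving ones, so the budget of Phase~3 rounds comfortably exceeds what is needed.

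The main obstacle I anticipate is justifying the alternative-greedy interpretation of the per-epoch scheduler behavior: in each epoch every processor picks only one large block as active via its own local ordering, and Phase~2 then takes a maximal conflict-free subset of those nominations. One has to check that this two-stage selection still drains every large old block within the claimed number of epochs rather than stalling on some subset. The saving facts here are that every color class of $G$ already contains at most one block per processor (because of the per-processor clique), and that all processors in a component execute the same deterministic ordering in Phase~2 by Lemma~\ref{lem:execute-large}, so the algorithm's per-epoch output can indeed be identified with an alternative-greedy color class after a suitable fixed ordering of the vertices of $G$.
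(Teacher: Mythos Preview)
Your proposal is correct and follows the paper's own route: induction on milestone intervals, the split into small residuals handled by Phase~3 and large old blocks handled via the block-conflict-graph together with the alternative-greedy coloring in Phase~2, with the same heavy/light dichotomy when $k>\lceil\sqrt{m}\rceil$. One trivial arithmetic slip aside ($C-1+k(C-1)=(k+1)(C-1)$, not $(k+1)C-2$, though either way $\Delta(G)+1\le C(k+1)$ holds), your argument matches the paper's, and your final paragraph on reconciling the two-stage per-epoch selection with an alternative-greedy color class actually articulates a point the paper leaves implicit.
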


\begin{proof}
The invariant concerns old transactions in a milestone interval.
We show the  invariant by induction on the index of a milestone interval.

The base case concerns interval $I_1$.
The invariant holds because there are no transactions generated prior to this interval.
To show the inductive step,  suppose it holds for an interval~$I_j$, and consider interval~$I_{j+1}$.
By the inductive assumption, all the old transactions in~$I_{j+1}$ got generated during interval~$I_j$.

First, let us consider old transactions that do not belong to large old blocks.
Every old transaction that does not belong to large old blocks gets executed during the third phases of the epochs in interval~$I_{j+1}$.
To show this, observe that there are no collisions in the third phases of epoch so it suffices to count the number of such transactions. 
There are fewer than $LP$ such old transactions at a processor, since $L$ transactions of some type make a large block and there are at most $P$ types.
A milestone interval consists of $6 b n L P \cdot \min\{ k, \lceil\sqrt{m}\rceil \}$ rounds, so there are $6 b L P \cdot  \min\{ k, \lceil\sqrt{m}\rceil \} $ rounds assigned to each processor during Phase~3, which is greater than~$LP$.

Next, let us consider large old blocks.
The types of transactions selected for execution in Phase~2 do not collide, so no transaction is ever aborted in Phase~2.
To see this, suppose otherwise, that the two active processors include the same object in their respective active types.
The two processors eventually  communicate in the epoch, because they try each shared object as a communication medium.
This communication is successful when some shared object is used for communication, so the processors learn of their respective active types.

By the inductive assumption, there are at most $Cm$ large old blocks of transactions at the beginning of~$I_{j+1}$.
This means  a total of at most  $b nmP$ large blocks, which makes  $nbmLP$ transactions.
We show next that all large old blocks in interval~$I_{j+1}$ get executed by the end of~$I_{j+1}$.

At the start of the first epoch in the interval~$I_{j+1}$, the block conflict graph is determined by old transactions only, those that make large blocks.
For the sake of the analysis, we consider this very graph during epochs in the milestone interval~$I_{j+1}$, disregarding new transactions generated in the meantime.
The graph evolves through consecutive epochs in the interval.  
Each of the following epochs contributes to pruning the original block conflict graph to produce an induced subgraph of the original graph.
These subgraphs evolve  in interval~$I_{j+1}$ through a sequence of different induced subgraphs, to stabilize when they become empty.

An execution of algorithm \textsc{Distributed-Scheduler} during interval~$I_{j+1}$ can be interpreted as an execution of the alternative greedy coloring of the block conflict graph as it is determined at the first epoch of~$I_{j+1}$, by Lemma~\ref{lem:execute-large}.
The consecutive epoch numbers in interval~$I_{j+1}$ could be interpreted as colors.
By Proposition~\ref{prop:alternative-greedy}, the number of assigned colors is at most the maximum degree of the block conflict graph plus~$1$.
Large blocks of transactions are made active during an epoch in the order of their creation.
The maximum assigned color is the last round in which all large old blocks of  transactions get completed.

The contribution to a processor congestion during a milestone interval is at most $bnP=C$, by~\eqref{eqn:upperbound-distributed}.
Since each transaction uses at most $k$ objects and each object belongs to at most $C$ old blocks, by the bound~\eqref{eqn:upperbound-distributed}, each old block collides with at most $(C-1)k$ other old blocks. 
It follows that the maximum degree of the block conflict graph  is at most $C-1+(C-1)k$.
The alternative greedy coloring assigns at most these many colors:
\begin{equation}
\label{eqn:bound-on-colors}
(C-1)k+C=Ck-k+C\le C(k+1)
\ .
 \end{equation}
This is also an upper bound on the number of epochs spent to complete executing all old blocks.

To show the inductive step, we consider two cases, depending on the relative magnitude of $k$ and $\lceil\sqrt{m}\rceil$.
 It suffices to demonstrate that the number of epochs in the interval $I_{j+1}$ is an upper bound on the number of colors assigned to the vertices of a graph of old blocks by a greedy coloring.

Suppose first that $k\le\lceil\sqrt{m}\rceil$.
The number of epochs of a milestone interval is $2Ck$, which is at least as large as the number of colors of the block conflict graph assigned by the greedy coloring and estimated in~\eqref{eqn:bound-on-colors}. 
This is because the inequality 
\[
2Ck\ge C(k+1)=Ck+C
\]
 is equivalent to $Ck\ge C$, which holds since both $C\ge 1$ and $k\ge 1$.

The other case is $k>\lceil\sqrt{m}\rceil$.
Let us call an old block \emph{heavy} if its weight is greater than $\lceil \sqrt{m}\rceil$ and \emph{light} if its weight is at most~$\lceil\sqrt{m}\rceil$.
There are at most $C \lceil\sqrt{m}\rceil$ heavy old transactions, as otherwise the total weight of old blocks would be strictly greater than  $C \lceil\sqrt{m}\rceil \cdot \sqrt{m}\ge Cm$,  which is impossible.
Suppose, conservatively, that if a heavy block is scheduled to be executed during the second phase of an epoch then this is the only active block executed at this phase of this epoch.
There are at most $C \lceil\sqrt{m}\rceil$ such epochs.
The remaining epochs in the interval execute light blocks during their second phases.
We interpret these epochs/phases as representing an execution of the alternative greedy coloring.
The subgraph induced by vertices of the form $(v,S)$, where $S$ is a light large block, is colored by at most $C(\lceil\sqrt{m}\rceil)$ colors, by~\eqref{eqn:bound-on-colors} and the inequality $\lceil\sqrt{m}\rceil<k$.

To combine all this together, observe that there are at most $C \lceil\sqrt{m}\rceil$ epochs needed to execute heavy transactions in their second phases, and at most $C\lceil\sqrt{m}\rceil$ epochs to execute light transactions in their second phases, for a total of these many epochs: $C \lceil\sqrt{m}\rceil + C\lceil\sqrt{m}\rceil=2C \lceil\sqrt{m}\rceil$.
A milestone interval consists of these many epochs:
\[
2 n P b\cdot\lceil\sqrt{m}\rceil=2C\lceil\sqrt{m}\rceil
\ ,
\]
so by its end all large old blocks get completed during second phases.

This completes the proof of the inductive step of the distributed milestone invariant.
The number of old transactions in a milestone interval is at most $bnmLP$, by~\eqref{eqn:upperbound-distributed}.
This bound means at most  $b n^5 m^3\,P$ old transactions.
\end{proof}

Next, we show that algorithm \textsc{Distributed-Scheduler} is stable and has bounded transaction latency for suitably low transaction generation rates.

\begin{theorem}
\label{thm:distributed-scheduler}

If algorithm \textsc{Distributed-Scheduler} is executed against an adversary of type $(\rho,b)$, such that each generated transaction accesses at most  $k\le \frac{m}{2}$ objects out of $m$ shared objects available, and transaction-generation rate $\rho$ satisfies $\rho < \max\bigl\{ \frac{1}{6k},\frac{1}{6\lceil\sqrt{m}\rceil}\bigr\}$, and the bulk of the system is sufficiently large with respect to $\rho$, then the number of pending transactions at a round is at most $2 b n^5 m^3\,2^{\text{H}(\frac{k}{m})m}$ and  transaction latency is at most $12 b n^5 m^2 \,2^{\text{H}(\frac{k}{m})m} \min\{ k, \lceil\sqrt{m}\rceil \}$.
\end{theorem}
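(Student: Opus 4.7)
\medskip

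\noindent\textbf{Proof proposal.}
The plan is to mirror the analysis of Theorem~\ref{thm:centralized-scheduler}, with the distributed milestone invariant (Lemma~\ref{lem:distributed-invariant}) playing the role of the centralized invariant (Lemma~\ref{lem:centralized-invariant}). The hypotheses of Lemma~\ref{lem:distributed-invariant} are satisfied because the theorem assumes $\rho<\max\{\frac{1}{6k},\frac{1}{6\lceil\sqrt{m}\rceil}\}$ and that the bulk of the system is sufficiently large for~$\rho$, so the invariant is the main engine I will invoke.

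First I would bound the number of pending transactions at an arbitrary round. Fix a round and let it lie in a milestone interval $I_j$. By the invariant, the number of old transactions at the first round of $I_j$, and hence at the given round, is at most $b n^5 m^3 P$. The number of new transactions generated during $I_j$ is bounded by the same reasoning that underlies~\eqref{eqn:upperbound-distributed}: each transaction uses at least one object, and the aggregate contribution to congestion over all objects during $I_j$ is at most $Cm\cdot L=bnmLP$, which after substituting $L=(n-1)^2 n^2 m^2\le n^4 m^2$ is at most $b n^5 m^3 P$. Summing the two bounds gives at most $2 b n^5 m^3 P$ pending transactions, and the claimed bound follows from the entropy estimate $P\le 2^{\mathrm{H}(k/m)\,m}$ valid for $k\le m/2$.

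Next I would bound transaction latency. A transaction generated in milestone interval $I_j$ is, by definition, old at the start of $I_{j+1}$; by the invariant applied to $I_{j+1}$, it gets executed by the end of $I_{j+1}$. Hence its delay is at most the combined length of two consecutive milestone intervals, that is, $2\cdot 6bnLP\cdot\min\{k,\lceil\sqrt{m}\rceil\}=12bnLP\cdot\min\{k,\lceil\sqrt{m}\rceil\}$ rounds. Substituting $L\le n^4 m^2$ and $P\le 2^{\mathrm{H}(k/m)\,m}$ yields the stated latency bound of $12 b n^5 m^2\cdot 2^{\mathrm{H}(k/m)\,m}\min\{k,\lceil\sqrt{m}\rceil\}$.

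The argument is essentially a bookkeeping exercise once Lemma~\ref{lem:distributed-invariant} is available, so I do not foresee a conceptual obstacle. The only place that requires care is matching the constants: tracking the length of a milestone interval in rounds, substituting the definition of $L$, and using $P\le 2^{\mathrm{H}(k/m)\,m}$ exactly once so that the final form of the bounds coincides with what the theorem statement asserts.
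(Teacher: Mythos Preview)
Your proposal is correct and follows essentially the same approach as the paper's proof: both parts are derived directly from the distributed milestone invariant (Lemma~\ref{lem:distributed-invariant}), together with the length of a milestone interval and the entropy bound $P\le 2^{\mathrm{H}(k/m)m}$. The only cosmetic difference is that for the count of new transactions in~$I_j$ the paper re-invokes Lemma~\ref{lem:distributed-invariant} (those transactions become old in $I_{j+1}$), whereas you recompute the same bound $bnmLP\le bn^5m^3P$ directly from~\eqref{eqn:upperbound-distributed}; both routes are equivalent.
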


\begin{proof}
To estimate the number of transactions pending at a round, let this round belong to a milestone interval~$I_k$.
The number of old transactions at any round of the interval~$I_k$ is at most $b n^5 m^3 P$, by the distributed milestone invariant formulated as Lemma~\ref{lem:distributed-invariant}.
During the interval~$I_k$, at most $b n^5 m^3P$ new transactions can be generated, again by Lemma~\ref{lem:distributed-invariant}, because they will become old when the next interval begins.
So $2 b n^5 m^3 P\le 2 b n^5 m^3\,2^{\text{H}(\frac{k}{m})m}$  is an upper bound on the number of pending transactions at any round, since $P=\sum_{i=1}^k \binom{m}{i}\le 2^{\text{H}(\frac{k}{m})m}$ for $k\le \frac{m}{2}$.

To estimate the transaction latency, we use the property that a transaction generated in an interval gets executed by the end of the next interval, again by the distributed milestone invariant formulated as Lemma~\ref{lem:distributed-invariant}.
This means that transaction latency is at most twice the length of an interval, which is $2\cdot6 b n L P \min\{ k, \lceil\sqrt{m}\rceil \}$, where $L=(n-1)^2 n^2 m^2$.
We obtain that the latency is at most $12 b n^5 m^2\, 2^{\text{H}(\frac{k}{m})m} \min\{ k, \lceil\sqrt{m}\rceil \} $.
\end{proof}

\bibliographystyle{plainurl}

\bibliography{stable-TM}

\end{document}